\newtheorem{thm}{Theorem}
\newtheorem{defn}{Definition}
\newtheorem{lem}{Lemma}
\newtheorem{asm}{Assumption}
\newtheorem{remark}{Remark}
\newcommand\suppress[1]{} 
\newcommand\switchVersion[3]{\ifthenelse{\boolean{#1}}%
{#2\suppress{#3}}%
{\suppress{#2}{\color{magenta}#3}}}
\newcommand{\squeezeupSmall}{\vspace{-2mm}}
\newcommand{\squeezeupMid}{\vspace{-4mm}}
\newcommand{\tinydisplayskip}{%
  \setlength{\abovedisplayskip}{4pt}%
  \setlength{\belowdisplayskip}{4pt}%
  \setlength{\abovedisplayshortskip}{0pt}%
  \setlength{\belowdisplayshortskip}{0pt}} 
\renewcommand{\phi}{\varphi}
\renewcommand{\epsilon}{\varepsilon}
\newcommand{\maji}[1]{\ensuremath{\mathbb{#1}}}
\newcommand{\bigfun}[1]{\ensuremath{\mathcal{#1}}}
\newcommand{\UUU}{{\mathcal{U}}}
\newcommand{\R}{{\maji{R}}}
\newcommand{\Rnonneg}{{\maji{R}}_{\geq 0}}
\newcommand{\Rpos}{{\maji{R}}_{> 0}}
\newcommand{\Z}{{\maji{Z}}} 
\newcommand{\Zpos}{{\maji{Z}}_{> 0}}
\newcommand{\N}{\maji{Z}_{\geq 0}}
\newcommand{\C}{C}
\newcommand{\CCC}{\mathcal{C}}
\newcommand{\SSS}{{\bigfun{S}}}
\newcommand{\GGG}{\bigfun{G}}
\newcommand{\VVV}{\mathcal{V}}
\newcommand{\fw}{\rightarrow}
\newcommand{\bw}{\leftarrow}
\newcommand{\xto}[1]{\xrightarrow{#1}}
\newcommand{\powerset}{{\ensuremath{\wp}}}
\newcommand{\ens}[1]{\left\{ #1 \right\}}
\newcommand{\Ball}[2]{\mathcal{B}_{#2}(#1)} 
\newcommand{\len}{{{\text{len}}}}
\newcommand{\model}{{\mathcal{M}}}
\newcommand{\traj}{\sigma}
\newcommand{\Traj}{\mathit{Traj}}
\newcommand{\FTraj}{\mathit{FTraj}}
\newcommand{\IRun}{\mathit{Run}}
\newcommand{\FRun}{\mathit{FRun}}
\newcommand{\FPlay}{\mathit{FPlay}}
\newcommand{\FPlayi}{\FPlay_1}
\newcommand{\FPlayii}{\FPlay_2}
\newcommand{\control}[2]{{#1}/{#2}}
\newcommand{\ini}{\text{in}}
\DeclareMathOperator{\logic}{2-LTL}
\newcommand{\Coloneqq}{\mathrel{\mathop{::}=}}
\newcommand{\Dis}[2]{[{#1}]_{#2}} 
\DeclareMathOperator{\Inf}{Inf}
\DeclareMathOperator{\MP}{MP}
\newcommand{\mppg}{mean-payoff parity game}
\newcommand{\mppgs}{\mppg{}s}
\newcommand{\pmay}{{\rho_\exists}}
\newcommand{\pmust}{{\rho_\forall}}
\title{\LARGE \bf
Symbolic Self-triggered Control of Continuous-time Non-deterministic Systems without Stability Assumptions for 2-LTL Specifications
}
\author{Sasinee Pruekprasert, Clovis Eberhart, and J\'{e}r\'{e}my Dubut% <-this % stops a space
\thanks{The authors are supported by ERATO HASUO Metamathematics for Systems Design Project (No. JPMJER1603), JST. J. Dubut is also supported by Grant-in-aid No. 19K20215, JSPS.}
\thanks{The authors are with National Institute of Informatics, Hitotsubashi 2-1-2, Tokyo 101-8430, Japan
        {\small \{sasinee, eberhart, dubut\}@nii.ac.jp}.}%
\thanks{C. Eberhart and J. Dubut are also affiliated with the Japanese-French Laboratory for Informatics.}%
}
\begin{document}
{\onecolumn\large 
\noindent\textcopyright\ 2020 IEEE.  Personal use of this material is permitted.  Permission from IEEE must be obtained for all other uses, in any current or future media, including reprinting/republishing this material for advertising or promotional purposes, creating new collective works, for resale or redistribution to servers or lists, or reuse of any copyrighted component of this work in other works.}
\newpage
\twocolumn

\maketitle
\thispagestyle{empty}
\pagestyle{empty}

%%%%%%%%%%%%%%%%%%%%%%%%%%%%%%%%%%%%%%%%%%%%%%%%%%%%%%%%%%%%%%%%%%%%%%%%%%%%%%%%
\begin{abstract}
We propose a symbolic self-triggered controller synthesis procedure for non-deterministic
continuous-time nonlinear systems without stability assumptions.
The goal is to compute a controller that
satisfies two objectives.
The first objective is represented as a specification in a fragment of LTL, which we call 2-LTL.
The second one is an energy objective, in the sense that control inputs are
issued only when necessary, which saves energy.
To this end, we first quantise the state and input spaces, and then translate
the controller synthesis problem to the computation of a winning strategy in a
mean-payoff parity game.
We illustrate the feasibility of our method on the example of a navigating
nonholonomic robot.
\end{abstract}
%%%%%%%%%%%%%%%%%%%%%%%%%%%%%%%%%%%%%%%%%%%%%%%%%%%%%%%%%%%%%%%%%%%%%%%%%%%%%%%%

\section{Introduction}
Not only has self-triggered control been a hot academic research topic in recent years, but it
also provides a variety of practical implementations~\cite{HJT2012}. By performing sensing 
and actuation only when needed, self-triggered control is well-known as an energy-aware 
control paradigm to save communication resources for Networked Control 
Systems~\cite{HAD2018}. The lifespan of battery-powered
devices can be prolonged by reducing their energy consumption~\cite{HJT2012} and 
the communication load of nonholonomic robots significantly reduced in comparison to 
using periodic controllers~\cite{SEMGL2019}. 
However, previous research on self-triggered control of continuous-time systems only studies
simple specifications such as stability \cite{anta2010} and reach-avoid or safety problems \cite{Hashimoto2019,  hashimoto20}. 

%{\color{blue} Kazumune's paper has a lot of self-triggered control references. They are all using reachability analysis. We should say that our main novelty is the use of games? And we should cite some of them.}

The main reason for this limitation is that those approaches are based on reachability 
analysis. The main novelty of our work is to use techniques from game theory to go beyond 
reach-avoid and safety specifications for self-triggered control.
Game theory, and in particular parity games \cite{emerson91}, is a well-known technique
to deal with expressive logic like the $\mu$-calculus \cite{emerson91} and
$\text{CTL}^*$ \cite{friedmann13}, as the parity winning conditions provide
complex scenarios and strategies while keeping computability.
In particular, parity games can be
used for control synthesis of reactive systems under Linear Temporal Logic (LTL) specifications \cite{luttenberger19}. %, {\color{red}and \cite{cite more}}. 
On the other hand, quantitative games %(e.g., energy games \cite{chakrabarti03} and mean-payoff games \cite{ehrenfeucht79}) 
such as mean-payoff games~\cite{ehrenfeucht79}
have been adapted to quantitative control specifications \cite{Pru2016, Ji2018}.
One of such specifications is the mean-payoff threshold problem
for the average control-signal length of self-triggered controllers. 
This threshold provides guarantees for the energy-saving and
communication-reduction performance of the controller:
the greater the average length of a signal is, the less often
the controller needs to perform sensing and actuation, and fewer
commands are sent across the network. 
%On the other hand, quantitative games (e.g., energy games \cite{chakrabarti03} and mean-payoff games \cite{ehrenfeucht79}) can be adapted for quantitative specifications like threshold problems.
In this paper, 
we deal with this threshold problem together with a logical specification
using mean-payoff parity games, which combine mean-payoff games and parity games.
%In this work, we consider mean-payoff parity games, which combine mean-payoff games and parity games.
The logical specifications can be dealt with the parity side, while the 
average %control-
signal length threshold
can be seen as a threshold problem in a mean-payoff game.

Our procedure is based on the symbolic control approach, which synthesises correct-by-design controllers of continuous-state systems.
In this approach, we first construct a symbolic model, which is a discrete abstraction of the continuous-state system, based on approximate simulation or bisimulation.
Then, we synthesise a symbolic controller
%solve the control problem on the symbolic model 
and leverage its control strategy to control the continuous-state system.
This technique allows us to synthesise provably-correct controllers for complex specification such as LTL specifications, which can hardly be enforced with conventional control methods.
However, previous symbolic control algorithms for continuous-time nonlinear systems under LTL specifications need stability assumptions (e.g.,~\cite{Kido2018, Zamani2014}), which do not hold in many systems.  
Symbolic control without stability assumptions is enforced on simpler classes of specifications such as reach-avoid~\cite{Hashimoto2019, MR2019, Zamani2012}. %: to reach a target region while avoiding unsafe regions.
The work closest to ours is~\cite{Hashimoto2019}, in which the authors propose
an algorithm to synthesise symbolic self-triggered controllers for discrete-time deterministic
systems under reach-avoid specifications.

This work proposes a symbolic self-triggered control procedure for
continuous-time non-deterministic nonlinear systems without stability
assumptions for specifications represented by a fragment of LTL, which
we call 2-LTL.
To the best of our knowledge, 
our work is the first to study symbolic control of continuous-time nonlinear systems without stability assumptions for a class of LTL specifications that is strictly more expressive than reach-avoid.
Our procedure operates in several steps: (1) constructing a finite 
symbolic model of the continuous system, 
(2) %formulate the self-triggered control process in the symbolic model as a mean-payoff parity game
translating the self-triggered controller 
synthesis problem 
on the symbolic model 
into a mean-payoff parity game problem, 
(3) constructing a winning strategy for the mean-payoff parity game, %using state-of-the-art algorithms,
(4) translating the strategy back into a controller for the symbolic model, and
(5) translating the controller for the symbolic model back into one for the continuous system.

\switchVersion{shortVersion}{}{
The paper is organised as follows. In Section~\ref{section: framework}, we introduce the 
systems of interest,
together with their interpretation as state-transition models, and their controllers. We then 
formulate our problem of finding a controller for those systems which satisfies some logical 
specification in a fragment of LTL (called \emph{$\logic$}) and some self-trigger conditions 
in Section~\ref{section: problem}. We 
then translate this problem into a control synthesis problem for a symbolic model in 
Section~\ref{section: reduction}. In Section~\ref{sec: algo}, we provide some basics 
on mean-payoff parity games, and a translation of our 
problem into synthesising a strategy for such a game. We then illustrate our method on the
example of a nonholonomic robot in Section~\ref{section: illus example}. Finally, 
Section~\ref{section: conclusion} concludes the paper.
}
%\begin{table}
%\caption{An Example of a Table}
%\label{table_example}
%\begin{center}
%\begin{tabular}{|c||c|}
%\hline
%One & Two\\
%\hline
%Three & Four\\
%\hline
%\end{tabular}
%\end{center}
%\end{table}

%\subsection*{Notation:}
\textit{Notation:} We denote vectors in $\mathbb{R}^m$ by $x = \begin{bmatrix} x_1 &
\cdots & x_m\end{bmatrix}^{\intercal}$. For such a vector $x$, we use $\lVert x \rVert$ for its infinity norm $\max\,\{\,\lvert x_i\rvert ~ \mid\, i \in \{1, \ldots, m\}\}$.
Given $x \in \mathbb{R}^m$ and $r \in \mathbb{R}_{> 0}$, we write $\Ball{x}{r}$
for the ball $\{y \in \mathbb{R}^m \,\mid\, \lVert x-y \rVert \leq r\}$ of centre $x$ and radius $r$. 
Finally, given a set $X$, we denote its powerset $\{Y \,\mid\, Y\subseteq X\}$ by $\powerset(X)$.

%%%%%%%%%%%%%%%%%%%%%%%%%%%%%%%%%%%%%%%%%%%%%%%%%%%%%%%%%%%%%%%%%%%%%%%%%%%%%%%%
\section{Control Framework} \label{section: framework} 

\subsection{System} \label{subsection: system} 
We formalise a non-deterministic continuous-time nonlinear system as a 6-tuple
%\begin{equation*}
$
  \Sigma = (X, X_\ini, U, {\UUU}, \xi^\fw, \xi^\bw),
$
%\end{equation*}
where
$X \subseteq \R^n$ is a bounded convex state space,
$X_\ini \subseteq X$ is a space of initial states,
$U \subseteq {\R}^m$ is a bounded convex space of control inputs,
${\UUU}$ is a set of control signals of the form
$[0,T] \to U$ that assign a control input at each time in the interval $[0,T]$ with $T \in \Rpos$,
and $\xi^\fw$, $\xi^\bw$ $: \R^n \times \UUU \times \Rnonneg \to \powerset(\R^n)$ are functions
such that $\xi_{x,u}^\fw(0) = \xi_{x,u}^\bw(0) = \{x\}$.
Intuitively, given a state $x \in \R^n$, a signal $u \in \UUU$, and a time $t \geq 0$,
$\xi_{x,u}^\fw(t)$ (\emph{resp.} $\xi_{x,u}^\bw(t)$) is the reachable states
from $x$ (\emph{resp.} the set of states from which the system can reach $x$) under the control signal $u$ at time $t$.

The system is defined on the whole Euclidean space $\R^n$, but we are
only interested in its behaviour on a bounded subspace $X$ because the
quantities involved in systems are physically bounded, as observed
in~\cite{Zamani2012}.
For technical reasons, we also assume that the distance %$d(X_\ini,\R^n\setminus X)$
from $X_\ini$ to the boundary of $X$ is positive.

The system $\Sigma$ is said to be \emph{forward and backward complete} \cite{Angeli1999} if
$\xi_{x,u}^\fw(t) \neq \emptyset$ and $\xi_{x,u}^\bw(t) \neq \emptyset$ for all
$(x,u) \in \R^n \times \UUU$ and all $t \in [0, \len(u)]$, where $\len(u)= T$ is the length of the signal $u:[0,T] \to U$.
In other words, under any control signal,
there exist a state reachable from $x$ and a state that reaches $x$
at any time within the signal length.
%Then, we define a \emph{incrementally forward and backward complete} system as follows. 

\begin{defn}\label{defn: inc fw and bw complete} % (incremental forward and backward completeness)\\
A system $\Sigma$ is \emph{incrementally forward and backward complete} if
it is forward and backward complete, and, for each $u \in \UUU$,
there exist functions $\beta^\fw_u, \beta^\bw_u: \Rnonneg \times \Rnonneg \to \Rnonneg$
such that 1) for any $t \in \Rnonneg$, $\beta^\fw_u(\_, t)$ and
$\beta^\bw_u(\_, t)$ are strictly increasing and their limits at
$+\infty$ is $+\infty$,
and 2) for any $x_1,x_2 \in \R^n$, $u \in {\UUU}$, and $t \leq \len(u)$,
		\switchVersion{shortVersion}{
		\begin{enumerate}
		\item[2.1)] $\forall (x_1',x_2') \in \xi^\fw_{x_1,u}(t) \times \xi^\fw_{x_2,u}(t),
			\lVert x_1' - x_2'\rVert \leq 
      		\beta^\fw_u(\lVert x_1 - x_2 \rVert,t)\rlap{,}  
      		$
		\item[2.2)] $\forall (x_1',x_2') \in \xi^\bw_{x_1,u}(t) \times \xi^\bw_{x_2,u}(t),
			\lVert x_1' - x_2'\rVert \leq 
        	\beta^\bw_u(\lVert x_1 - x_2 \rVert,t)\rlap{.}
			$ 
		\end{enumerate}		
		}{
		\begin{enumerate}
		\item[2.1)] for all $(x_1',x_2') \in \xi^\fw_{x_1,u}(t) \times \xi^\fw_{x_2,u}(t)$,
			\begin{equation} \label{eq: inc fw complete}
			\lVert x_1' - x_2'\rVert \leq 
      \beta^\fw_u(\lVert x_1 - x_2 \rVert,t)\rlap{,}
			\end{equation} 
		\item[2.2)] for all $(x_1',x_2') \in \xi^\bw_{x_1,u}(t) \times \xi^\bw_{x_2,u}(t)$,
			\begin{equation} \label{eq: inc bw complete}
			\lVert x_1' - x_2'\rVert \leq 
        \beta^\bw_u(\lVert x_1 - x_2 \rVert,t)\rlap{.}
			\end{equation} 
		\end{enumerate}
		}
\end{defn}

%\switchVersion{shortVersion}{}{
%\begin{remark}
Notice that incremental forward and backward completeness does not depend on 
%the system's behaviour 
 the state space $X$, but on $\R^n$. There may exists $(x,u)
\in X \times \UUU$ such that $\xi^\fw_{x,u}(t) \cap X = \emptyset$,
i.e., the system runs out of the desired state space.
%\end{remark}
%}

\begin{asm}\label{asm: inc fw and bw complete} 
The system $\Sigma$ is incrementally forward and backward complete. %, and the functions $\beta^\fw_u$ and $ \beta^\bw_u$ can be computed.
\end{asm}

Assumption~\ref{asm: inc fw and bw complete} is similar to the one used
in~\cite{Zamani2012}, but adapted to non-deterministic systems and
taking backward dynamics into account. 
% The intuition behind Assumption~\ref{asm: inc fw and bw complete} is that,
% for any pair of states $x$ and $x'$, 
% we have functions $\beta_u^\fw$ and $ \beta_u^\bw$ 
% for bounding the distance of the two states reached from $x$ and $x'$ at each time under each control signal $u$.
The intuition behind Assumption~\ref{asm: inc fw and bw complete} is
that the distance between the states reached from two starting points
can be bound by an expression that depends only on the distance
between those starting points, the control signal, and the run time.
In addition, we require the following assumption.

\begin{asm}\label{asm: Lipchitz} %(Lipchitz condition)\\
For any control signal $u \in \UUU$,
we have functions $\alpha^\fw_u, \alpha^\bw_u: \Rnonneg \times [0, \len(u)] \to \Rnonneg$
such that 1) for any $t \in \Rnonneg$, $\alpha^\fw_u(\_,t)$ and
$\alpha^\bw_u(\_,t)$ are increasing, and 2) for any $x_1, x_2 \in X$ and any $t \in [0, \len(u)]$, we have 
\begin{enumerate}
\item[2.1)] for any $y_2 \in \xi_{x_2,u}^\fw(t),
	\rVert x_1 - y_2 \lVert \leq \alpha^\fw_u(\rVert x_1 - x_2 \lVert, t)\rlap{,}
	$
\item[2.2)] for any $y_2 \in \xi_{x_2,u}^\bw(t),
    \rVert x_1 - y_2 \lVert \leq \alpha^\bw_u(\rVert x_1 - x_2 \lVert, t)\rlap{.}
	$
\end{enumerate}
\end{asm}

% Assumption~\ref{asm: Lipchitz} ensures that, for any pair of states $x_1$ and $x_2$ and a control signal $u$,
% we have functions $\alpha^\fw_u$ and $\alpha^\bw_u$ bounding the distance between $x_1$ and states reached from $x_2$ under the signal $u$ at each time $t$.
Assumption~\ref{asm: Lipchitz} basically states that the set of
reachable states cannot be arbitrarily far from the starting state
(for a given input signal and run time).
Those functions $\beta^\fw_u$, $ \beta^\bw_u$, $\alpha^\fw_u$, and $\alpha^\bw_u$ can typically be computed using Lyapunov functions (see \cite{Zamani2012, Angeli1999, Angeli2002} for details).

\subsection{State-transition Model} \label{subsection: model}
Let us first introduce general definitions for state-transition models
and their controllers.
For this paper, a state-transition model is given by a quadruple
%\begin{equation*}
$\model = (Y, Y_\ini, \VVV, \rightarrow),$
%\end{equation*}
where $Y$ is either a continuous or a discrete state space, $Y_\ini \subseteq Y$ is a set of initial states, $\VVV$ is a set of control signals, and 
$ \rightarrow\, \subseteq Y \times \VVV \times Y$ is the transition relation.  
%The following definitions are defined for any transition model $\model  = (X, x_\ini, \UUU,\Delta)$.
%A sequence $y_0 u_0 y_1 u_1 \ldots \in Y_\ini ( \UUU Y)^\omega$ an \emph{infinite run}  
%and $x_0 u_0 x_1 \ldots u_{l-1} x_l \in X_\ini(\UUU X)^*$ a \emph{finite run}.
For a given state $y \in Y$,
a sequence $y_0 u_0 y_1 u_1 \ldots \in Y (\VVV Y)^\omega$ (\emph{resp.} $y_0 u_0 y_1 \ldots u_{l-1} y_l \in Y (\VVV Y)^*$) is a \emph{run} (\emph{resp.} a \emph{finite run})
generated by $\model$ \emph{starting from} the state $y$
if $y_0 = y $ and
 $(y_i, u_i, y_{i+1}) \in \to$ for any $i \in \N$ (\emph{resp.} $i \in \{0,\ldots,l-1\}$).
Let $\IRun(\model, y)$ (\emph{resp.} $\FRun(\model,y)$) denote the set of all runs (\emph{resp.} finite runs) generated by $\model$ from $y$.
Let % $\IRun(\Sigma, Y) = \bigcup_{x\in Y}\IRun(\Sigma, x)$, and 
$\IRun(\model)= \bigcup_{y\in Y_\ini} \IRun(\model, y)$ and $\FRun(\model)= \bigcup_{y\in Y_\ini} \FRun(\model, y)$.

%Following~\cite{Zamani2012}, 
We define the state-transition model of $\Sigma$ as follows. 

%First, following~\cite{Zamani2012}, let us define the state-transition model of a system $\Sigma$ as follows.
\begin{defn}\label{defn: model}
The state-transition model $\model(\Sigma)$ of a system $\Sigma= (X, X_\ini, U, {\UUU}, \xi^\fw, \xi^\bw)$ is 
$
\model(\Sigma) = (X, X_\ini, \UUU,\Delta),
$
where the transition relation $\Delta \subseteq X \times \UUU \times X$
is given by 
\switchVersion{shortVersion}{\squeezeupSmall}{}
\begin{align*} \label{eq: trans model}
  (x, u, x') \in \Delta \text{~iff~}
    & x' \in \xi_{x,u}^\fw(\len(u)) \text{,~}  x \in
      \xi_{x',u}^\bw(\len(u)) \\
    & \text{~and, for all $t \leq \len(u)$, $\xi^\fw_{x,u}(t)
      \subseteq X$.}
\end{align*}
\end{defn}
\switchVersion{shortVersion}{\squeezeupSmall}{
\begin{remark}$\model(\Sigma)$ is called a \emph{symbolic model} in~\cite{Zamani2012}.
In this paper, however, we reserve the term \emph{symbolic model} for the discrete-state system in Section  \ref{subsection: symbolic model}.
\end{remark}
}

Runs of a model are discrete sequences of states, but the system runs
in continuous time.
In order to fill this gap, we introduce the notion of trajectory to
match these discrete runs to continuous sequences of states.
\begin{defn}\label{defn: trajectory}
  A \emph{trajectory} of a system $\Sigma$ starting from a state $x
  \in X$ induced by a run $x_0 u_0 x_1 u_1 x_2 \ldots \in
  \IRun(\model(\Sigma), x)$ is a function $\traj: \Rnonneg \to X$ such
  that, for all $k \in \N$ and all $t \in \big[\sum_{i < k} \len(u_i), \sum_{i \leq k} \len(u_i)\big]$,
  \switchVersion{shortVersion}{\squeezeupSmall
  \begin{align*}
  \traj&(t) \in \xi_{x_k,u_k}^\fw\big( t - \sum_{i < k} \len(u_i)\big) 
   \cap \xi_{x_{k+1},u_k'}^\bw\big(\sum_{i \leq k} \len(u_i) - t\big),\\
  &\text{where }u_k'(s) = u_k\big(s+t- \displaystyle\sum_{i < k} \len(u_i)\big),  
   \forall s \in [0, \displaystyle\sum_{i \leq k} \len(u_i)-t]. 
  \end{align*} 
  }{
  \begin{equation*}  
   \traj(t) \in \xi_{x_k,u_k}^\fw\big( t - \sum_{i < k} \len(u_i)\big) 
   \cap \xi_{x_{k+1},u_k'}^\bw\big(\sum_{i \leq k} \len(u_i) - t\big),  
  \end{equation*}
  where 
  \begin{equation*} 
  u_k'(s) = u_k\big(s+t- \displaystyle\sum_{i < k} \len(u_i)\big)  
  \text{ for all }
   s \in [0, \displaystyle\sum_{i \leq k} \len(u_i)-t].
  \end{equation*}
  } 
\end{defn}
Let $\Traj(\Sigma, x, r)$ be the set of trajectories of $\Sigma$ that
are induced by a run $r \in \IRun(\model(\Sigma), x)$.
For any finite run $r_f \in \FRun(\model(\Sigma), x)$,
$\FTraj(\Sigma, x, r_f)$ is the set of finite trajectories defined in
the same way.
Let $\Traj(\Sigma, x) = \bigcup_{r \in \IRun(\model(\Sigma), x)}
\Traj(\Sigma, x, r)$, and $\Traj(\Sigma) = \bigcup_{x \in X_\ini}
\Traj(\Sigma,x)$.

\subsection{Controlled System} \label{subsection: controlled system}

In this section, we define controllers and controlled systems, and
explain the self-triggered control process.
First, we define \emph{model controllers} of $\model = (Y, Y_\ini,
\VVV, \rightarrow)$.

\begin{defn}\label{defn: model controller}
A \emph{model controller} of a state-transition model $\model$ is a function $\CCC: \FRun(\model) \to \VVV$.  
\end{defn}
 
%The intuition of the control process is as follows: based on the previously observed states and issued signals,
%the controller issues a signals to control the current state.
%The intuition of the control process will be given in Section~\ref{subsection: controlled system}.
Let $\control{\CCC}{\model}$ denote the state-transition model $\model$ controlled under $\CCC$. 
%We then define runs generated by $\control{\CCC}{\model}$.
A run $y_0 u_0 y_1  \ldots \in \FRun(\model,y)$ (\emph{resp.} a finite
run $y_0 u_0 y_1   \ldots u_{l-1} y_l  \in \IRun(\model,y)$) is
generated by $\control{\CCC}{\model}$ if it satisfies the following
conditions: 1) $y_0 = y$, and 2) for all $i \in \N$ (\emph{resp.} for
all $i \in [0, l-1]$), we have $u_{i} = \CCC(y_0 u_0 \ldots y_i)$.
 %$y_{i+1} \in \xi_{y_i,u_{i}}^\fw(\len(u_{i}))$, and
 %$y_{i} \in \xi_{y_{i+1},u_{i}}^\bw(\len(u_{i}))$.
%\end{enumerate}
Then, let $\IRun(\control{\CCC}{\model}, y)$ (\emph{resp.} $\FRun(\control{\CCC}{\model}, y)$) denotes the set of all runs (\emph{resp.} finite runs) generated by $\control{\CCC}{\model}$ from $y \in Y$. 
Let $\IRun(\control{\CCC}{\model})= \bigcup_{y\in Y_\ini} \IRun(\control{\CCC}{\model}, y)$ and $\FRun(\control{\CCC}{\model})= \bigcup_{y\in Y_\ini} \FRun(\control{\CCC}{\model}, y)$.

\begin{defn}\label{defn: controller}
A \emph{controller} of $\Sigma = (X, X_\ini, U, {\UUU}, \xi^\fw, \xi^\bw)$ is a function $\C: \FRun(\model(\Sigma)) \to \UUU$.  
\end{defn}

Notice that a controller $\C$ of a system $\Sigma$ is defined based on its state-transition model $\model(\Sigma)$. This is because the controller issues control signals based on runs, which only track the states at the end of each signal.
Since Definition~\ref{defn: controller} is coherent with Definition~\ref{defn: model controller}, we can also regard $\C$ as a model controller of $\model(\Sigma)$.
Hence, we also use $\IRun(\control{\C}{\model(\Sigma)})$
(\emph{resp.} $\FRun(\control{\C}{\model(\Sigma)})$) to denote the sets of
runs (\emph{resp.} finite runs) of $\control{C}{\model(\Sigma)}$ from
initial states.
Furthermore, we use $\control{\C}{\Sigma}$ to denote the \emph{system}
$\Sigma$ \emph{controlled under} the controller $\C$, and define the
trajectories of $\control{\C}{\Sigma}$ in the same way as in
Definition~\ref{defn: trajectory}. 
%Similarly, trajectories can be \emph{generated} by a controlled system
%$\control{C}{\model(\Sigma)}$ when the run it is generated by is in
%$\IRun(\control{C}{\model(\Sigma)})$, so 
Thereby, the definitions of
trajectories $\Traj(\control{C}{\Sigma})$ and
$\FTraj(\control{C}{\Sigma})$ carry over to controlled systems
directly.

\begin{figure}[t]
      \centering
      \includegraphics[scale=0.25]{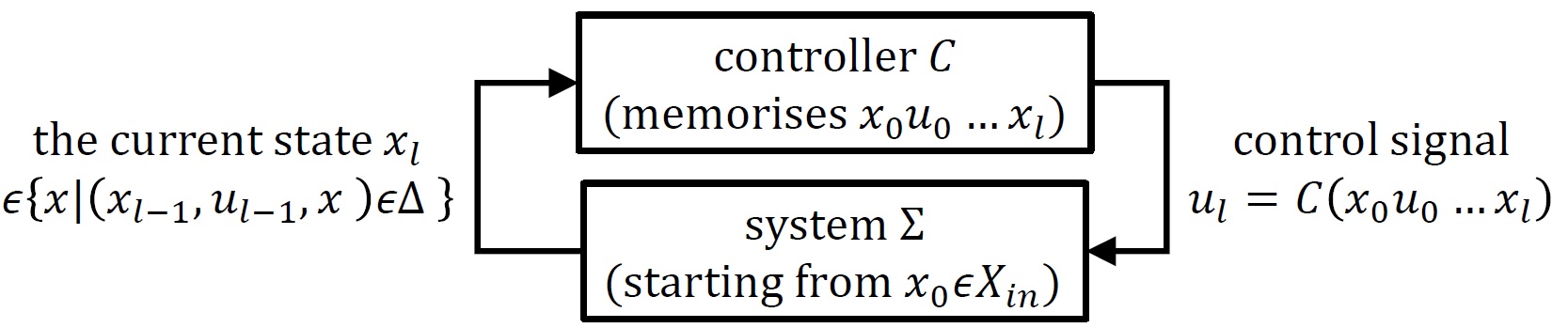}
      \switchVersion{shortVersion}{\squeezeupSmall}{}
      \caption{Overview of the self-triggered control process. Based on previously observed states and issued control signals, the controller issues a control signals to control the system.}
      \label{fig: control process}
\switchVersion{shortVersion}{\squeezeupMid}{}
\end{figure}
The overview of the control process is illustrated in Fig~\ref{fig: control process}.
First, the controller observes the initial state $x_0 \in X_\ini$ and 
issues the control signal $u_0 = \C(x_0)$. 
Then, to preserve energy, the controller is inactive throughout the duration of the control signal $u_0$. Namely, the longer the signal length $\len(u_0)$ is, the more energy is preserved.
Since the system is non-deterministic, there are several states that can possibly be reached under the signal $u_0$.
After the signal ends (at time $\len(u_0)$), 
the controller becomes active and resolves the non-determinism by detecting the actual current state $x_1$ 
and issue a new control signal $u_1 = \C(x_0 u_0 x_1)$. %based on the finite run $x_0 u_0 x_1$.
The process is then repeated.

%%%%%%%%%%%%%%%%%%%%%%%%%%%%%%%%%%%%%%%%%%%%%%%%%%%%%%%%%%%%%%%%%%%%%%%%%%%%%%%%
\section{Problem Formulation} \label{section: problem} 

Our goal is to synthesise a controller that satisfies two control objectives.
The first objective is described as a $\logic$ formula. %introduced in Section \ref{subsection: logic}.
The second one is an energy-preservation objective: to ensure that the average length of the issued control signals is above a given threshold.

\subsection{$\logic$ Specification} \label{subsection: logic} 
%\subsubsection{$\logic$}
%\label{subsubsubsection: logic}

We model the first objective using a fragment of LTL, which we call $\logic$.
Let $AP$ denote the set of atomic propositions, i.e., assertions that
can be either true or false at each state $x \in X$.
Let $P: X \to \powerset(AP)$ assign the set of atomic propositions
that hold at each state.

\begin{defn}
  Let $\logic$ be the logic whose formulas are the $\Phi$'s generated by the following grammar:
  {\switchVersion{shortVersion}{\tinydisplayskip}{}
  \begin{align*}
    \phi & \Coloneqq \top \mid p \mid  \neg \phi \mid \phi \vee \phi \\
    \Phi & \Coloneqq %\top \mid \bot \mid \phi \mid
      \Diamond \phi \mid \Box \phi \mid \Box\Diamond \phi \mid \Diamond\Box \phi
      \mid \Phi \vee \Phi \mid \Phi \wedge \Phi \rlap{,}
  \end{align*} 
  }
  where $p\in AP$ is an atomic proposition.
\end{defn}

We call $\phi$'s and $\Phi$'s state formulas and path formulas,
respectively. 
A logic specification is written as a path formula.
Here, $\Box$ and $\Diamond$ have the usual interpretation of LTL.
%Notice that we do not allow negations in path formulas,
%because it does not yield any additional expressive power.
% Here, $\Box$ and $\Diamond$ have the usual interpretation of LTL:
% $\Box\phi$ means that, for all executions from the current state,
% $\phi$ is true at all times; while $\Diamond\phi$ means that, for all
% executions from the current state, there is a time when $\phi$ is
% true. 
% $\Box\Diamond \phi$ can be interpreted as a ``Repeat'': for any
% execution from the current state, at any point in time, there is a
% further point in time when $\phi$ is true.
% $\Diamond\Box\phi$ can be interpreted as an ``Ultimately'': for any
% execution from the current state, there is a point in time from which 
% $\phi$ will be always true.
% \TODO{Really needed?}
% 
% Let us formalise those intuitions into semantics.
A state $x\in X$ satisfying (\emph{resp.} not satisfying) a state formula $\phi$ is denoted by $x \vDash \phi$ (\emph{resp.} $x \not\vDash \phi$). 
We also use the same notations $\traj \vDash \Phi$ and $\traj   \not\vDash \Phi$ for a trajectory $\traj: \Rnonneg \to X$ and a path formula $\Phi$. For every state $x \in X$, 
%is defined as a subset of stat
$x \vDash \phi$ is defined as follows: 
{\switchVersion{shortVersion}{\tinydisplayskip}{}
\begin{align*}
	x &\vDash \top &  x &\vDash  p \text{ if } p \in P(x)\\
	x &\vDash \neg\phi \text{ if } x \not\vDash \phi
      &  x &\vDash \phi_1 \vee \phi_2 \text{ if } x \vDash \phi_1 \text{ or }x \vDash \phi_2\rlap{,}
\end{align*}
}
and for all $\traj: \Rnonneg \to X$, $\sigma \vDash \Phi$ is defined as follows:
{\switchVersion{shortVersion}{\tinydisplayskip}{}
\begin{align*}
	%\traj &\vDash \top  ~~~~~~~ \traj \not\vDash \bot ~~~~~~~
		%\traj \vDash  \phi \iff \forall t \in \Rnonneg, \traj(t) \vDash \phi\\
	\traj &\vDash  \Diamond\phi  \text{~~if~~} \exists t \in \Rnonneg,\, \traj(t) \vDash \phi\\
	\traj &\vDash  \Box\phi  \text{~~if~~} \forall t \in \Rnonneg,\, \traj(t) \vDash \phi\\
	\traj &\vDash  \Box\Diamond\phi  \text{~~if~~} \forall t \in \Rnonneg,\, \exists t' > t,\, \traj(t') \vDash \phi\\
	\traj &\vDash  \Diamond\Box\phi  \text{~~if~~}\exists t \in \Rnonneg,\, \forall t' > t,\, \traj(t') \vDash \phi\\ 
	\traj &\vDash \Phi_1 \vee \Phi_2  \text{~~if~~}\traj \vDash \Phi_1 \text{ or } \traj \vDash \Phi_2\\
  \traj &\vDash \Phi_1 \wedge \Phi_2  \text{~~if~~} \traj\vDash \Phi_1 \text{ and } \traj \vDash \Phi_2\rlap{.}
\end{align*}
}
One objective of a controller $\C$ is to control the system in such a way that all  trajectories in $ \Traj(\control{\C}{\Sigma})$ satisfy a given $\logic$ path formula $\Phi$. 
%Let us denote this objective by $\control{\C}{\Sigma} \vDash \Phi$.
Notice that the class of $\logic$ specifications is more general than the reach-avoid specifications, 
which is studied in \cite{Hashimoto2019,MR2019,hashimoto20}.
For example, 
we can express the logic specification to reach $\mathtt{target\_region}$ while avoiding $\mathtt{unsafe\_region}$ using the $\logic$ formula $\Diamond \mathtt{target\_region} \wedge \Box \neg\mathtt{unsafe\_region}$.
%the formula $\Diamond \mathtt{target\_region} \wedge \Box \neg\mathtt{unsafe\_region}$ can express the control specification to reach the target region while avoiding the unsafe region.

%This is because the set $\Traj(\control{\C}{\Sigma})$ is an over-approximation of trajectories generated by the controlled system. 
%Therefore, ($\forall \traj \in \Traj(\control{\C}{\Sigma}), \traj \vDash \Diamond\phi$) does not imply %($\not \exists \traj \in \Traj(\control{\C}{\Sigma}), \traj \not\vDash \Phi$).

%$\TODO $ Let $P: X \times \R \to \powerset(AP)$ be the function such that,
%for given state $x \in X$ and radius $r \in \R$, 
%Here, we describe the logic used to describe our specifications.
%Much as \cite{}, we want a logic being able to express reach-avoid problems 
%(\TODO: what is the name of their fragment?), 
%but we also would like to be able to express more general problems, much as 
%a system that must navigate between different areas (reach), but also must enter 
%those area infinitely often. 

\subsection{Controller Synthesis Problem} \label{subsection: problem continuous} 

\begin{defn} \label{defn: problem continuous}
Given a system $\Sigma = (X, X_\ini, U, {\UUU}, \xi^\fw, \xi^\bw)$, a set $AP$ of atomic propositions, a function 
$P: X \to  \powerset(AP)$,
%$\PPP: X \times \R \to \powerset(AP)$, 
a $\logic$ formula $\Phi$, 
and a threshold $\nu \in \Rpos$,
the \emph{controller synthesis problem} is %to determine the existence and 
to synthesise  a controller $\C:\FRun(\model(\Sigma)) \to \UUU$ 
%determine the existence of a controller $\C: X (\UUU X)^* \to \UUU$
such that 
\begin{itemize}
%\item $\control{\C}{\Sigma} \vDash \Phi$ (i.e., $\traj \vDash \Phi, \forall\traj \in \Traj(\control{\C}{\Sigma})$), and 
  \item all finite runs in $\FRun(\control{C}{\model(\Sigma)})$ can be
    extended to an infinite run in $\IRun(\model(\Sigma))$,
\item $\traj \vDash \Phi$ for any $\traj \in \Traj(\control{\C}{\Sigma})$, and
\item 
$\displaystyle
\lim_{h \to \infty} \frac{1}{h} \sum_{i = 1}^ h \len(u_i) > \nu
$ for any $x_0 u_0 \ldots \in \IRun(\control{\C}{\model(\Sigma)})$,
\end{itemize}
or determine that such a controller $\C$ does not exist.
%and synthesise $\C$ if it exists.
\end{defn}
The first condition simply ensures that the controlled system does not
reach a deadlock, while the other two conditions are the actual
control objectives.

\section{Problem Reduction to Symbolic Control} \label{section: reduction} 
In this section, we state our symbolic controller synthesis problem,
which considers a discrete system obtained by quantising states and
inputs, and by restricting control signals to piecewise-constant ones.
We show that a symbolic controller for this problem also satisfies the
conditions in Definition~\ref{defn: problem continuous}. 

\subsection{Piecewise-constant Control Signal with Discrete Input} \label{subsection: discrete control input}

%For given $e = \begin{bmatrix} {\mathsf{e}}_1 & {\mathsf{e}}_2 & \cdots & {\mathsf{e}}_d
%	\end{bmatrix} ^\intercal \in \Rpos^d$ and $Y \subseteq \R^d$, 
%\begin{align*}
%\Dis{Y}{e} = \big\{\begin{bmatrix} {\mathsf{y}}_1 & {\mathsf{y}}_2 & \cdots & {\mathsf{y}}_d
%	\end{bmatrix} ^\intercal \in Y \mid 
%	{\mathsf{y}}_i = l_i \mathsf{e}_i,\ l_i \in \Z,\ 0 < i \leq d \big\}.
%\end{align*}
%Namely, $\Dis{Y}{e}$ is the quantised set of the space $Y$ using $d$-dimensional hyper-rectangles of size $e$.

For a given bounded convex control input space $U \subseteq \R^m$ and a discretisation parameter 
$\mu \in \Rpos$,
%$\mu = ( \mu_1 , \mu_2 , \cdots, \mu_m) \in \Rpos^m$,
let  
\begin{equation}\label{eq: quantise}
U_\mu = \big\{\begin{bmatrix} u_1 & \cdots & u_m
	\end{bmatrix} ^\intercal \in U \bigm| 
	u_i = 2 \mu l_i ,\ l_i \in \Z,\ i \leq m \big\} 
\end{equation}
 be the quantised input set by an $m$-dimensional hypercube of $2\mu$ edge length.
As $U$ is bounded, $U_\mu$ is finite.

Given $\tau\in \Rpos$ and $\ell= [\ell_\text{min}, \ell_\text{max}]$, let us consider a set
\begin{align*}
\UUU_{\tau, \ell, \mu} &=  \bigcup_{ j\tau \in \ell}  \{ u: [0, j\tau] \to U_\mu \} \mid 
j \in  \Zpos \text{ and}
\\
& 
\forall i \in \{0, \ldots,j-1\}, \forall t \in \big[i \tau , (i+1) \tau\big),
u(t) = u(i \tau) 
\}.
\end{align*}
of piecewise-constant control signals.
Each signal $u \in \UUU_{\tau, \ell, \mu}$ is a concatenation of
constant signals of length $\tau$ and value in the finite input set
$U_\mu$.
We limit the length of each signal $u \in \UUU_{\tau, \ell, \mu}$ to be in the range $\ell= [\ell_\text{min}, \ell_\text{max}]$; 
therefore, $\UUU_{\tau, \ell, \mu}$ is also a finite set.

%Given a system $\Sigma= (X, X_\ini, U, {\UUU}, \xi^\fw, \xi^\bw)$,
Hence, let us consider a system
$\Sigma_{\tau, \ell, \mu} =(X, X_\ini,$ $ U_\mu, {\UUU}_{\tau, \ell, \mu}, \xi^\fw, \xi^\bw)$,
which is the system $\Sigma$ restricted to piecewise-constant control signals in ${\UUU}_{\tau, \ell, \mu}$. 

\subsection{Symbolic Model and Symbolic Controller} \label{subsection: symbolic model}
A symbolic model is a state-transition model (see Section \ref{subsection: model}) with a discrete state space and a finite set of signals. 
For given $\eta \in \Rpos$, let
\begin{equation}\label{eq: quantise bigger}
  \Dis{X}{\eta} = \big\{x \in \mathbb{R}^n \bigm|
    x_i = 2 \eta l_i ,\ l_i \in \Z,\ i \leq n \,\wedge\,
    \Ball{x}{\eta}\cap X \neq \varnothing\big\}\rlap{.}
\end{equation}
Then, we define a symbolic model of $\Sigma_{\tau, \ell, \mu}$ as follows.

\begin{defn}\label{defn: symbolic model}
Given a system $\Sigma_{\tau, \ell, \mu} = (X, X_\ini, U_\mu, $ ${\UUU}_{\tau, \ell, \mu}, \xi^\fw, \xi^\bw)$ and 
%a set $AP$ of atomic propositions, a function $P: X \to  \powerset(AP)$,
a state-space quantisation parameter $\eta \in \Rpos$,
a \emph{symbolic model} is a state-transition model 
\begin{equation*}
  {\SSS}_{\eta}(\Sigma_{\tau, \ell, \mu}) =
    (Q = \Dis{X}{\eta}, Q_\ini = \Dis{X_\ini}{\eta},
    \UUU_{\tau, \ell, \mu}, \delta) 
\end{equation*} 
such that
$(q ,u, \tilde{q}) \in \delta$ if $(q ,u, \tilde{q}) \in Q \times \UUU_{\tau, \ell, \mu} \times Q$ and
\switchVersion{shortVersion}{
\begin{enumerate} 
\item $\forall x \in \Ball{q}{\eta}$ and $t \leq \len(u)$, $\xi_{x,u}^\fw(t) \subseteq X$, and
\item $\exists \tilde{x} \in \xi_{q,u}^\rightarrow (\len(u))$,
	$
	 	\lVert \tilde{x} - \tilde{q} \rVert \leq \beta_u^\fw(\eta, \len(u)) + \eta, \text{ and}
	$
\item $\exists x \in \xi_{\tilde{q},u}^\leftarrow (\len(u))$,
	$\lVert x - q \rVert \leq \beta_u^\bw(\eta, \len(u)) + \eta.
	$
%where $\beta_u^\fw, \beta_u^\bw$ are the functions in Assumption~\ref{asm: inc fw and bw complete}.
\end{enumerate}}
{
\begin{enumerate}
  % \item for all $t \leq \len(u)$, $\xi_{q,u}^\fw(t) \cap (\R^n
  %   \setminus X) = \emptyset$,
\item for all $x \in \Ball{q}{\eta}$ and $t \leq \len(u)$, $\xi_{x,u}^\fw(t) \subseteq X$,
\item there exists $x' \in \xi_{q,u}^\rightarrow (\len(u))$ such that
	\begin{equation*}
	 	\lVert x' - q' \rVert \leq \beta_u^\fw(\eta, \len(u)) + \eta, \text{ and}
	\end{equation*} 
\item there exists $x \in \xi_{q',u}^\leftarrow (\len(u)),$ such that
	\begin{equation*}\lVert x - q \rVert \leq \beta_u^\bw(\eta, \len(u)) + \eta.
	\end{equation*} 
%where $\beta_u^\fw, \beta_u^\bw$ are the functions in Assumption~\ref{asm: inc fw and bw complete}.
\end{enumerate}}
\end{defn} 

Notice that $\Dis{X}{\eta}$ may contain some points that are not in $X$,
but they have no outgoing transition in $\delta$, so they will not influence our controller synthesis algorithm.% in Section~\ref{section: reduction}.
%\TODO{Isn't it supposed to be $Q = X_\eta$? Or does it not matter?}

\switchVersion{shortVersion}{}{
\begin{remark}\label{remark: delta_under_approx}
To compute the first condition of $\delta$, we may under-approximate it by 
$\Ball{q}{\alpha^\fw(\eta, \len(u))} \subseteq X$.
\end{remark}
}

A \emph{symbolic controller} is a function
$S:\FRun({\SSS}_{\eta}(\Sigma_{\tau, \ell, \mu})) \to
\UUU_{\tau, \ell, \mu}$ that is a model controller of
${\SSS}_{\eta}(\Sigma_{\tau, \ell, \mu})$.

%\switchVersion{shortVersion}{}{
\begin{remark}
  We can also use multi-dimensional quantisation parameters $\mu =
  \begin{bmatrix} {\mu}_1 & \cdots & {\mu}_m \end{bmatrix}^\intercal
  \in \Rpos^m$ and $\eta = \begin{bmatrix} {\eta}_1 & \cdots &
  {\eta}_n\end{bmatrix} ^\intercal \in \Rpos^n$.
  In this case, Equation~\eqref{eq: quantise} becomes
  \begin{equation*} 
  U_\mu = \big\{\begin{bmatrix} u_1 & \cdots & u_m
  	\end{bmatrix}^\intercal \in U \mid 
  	u_i = 2 \mu_i l_i ,\ l_i \in \Z,\ i \leq m \big\},
  \end{equation*}
  and similarly for Equation~\eqref{eq: quantise bigger}.
\end{remark}
%}

\subsection{Symbolic Control and Approximate Simulation Relation}\label{subsec: approx_sim}
\switchVersion{shortVersion}{
To study the relationship between ${\SSS}_{\eta}(\Sigma_{\tau, \ell, \mu})$ and $\model(\Sigma_{\tau, \ell, \mu})$,
%reduce the controller synthesis problem to the synthesis of a symbolic controller, 
%Before further investigating symbolic controllers,
we introduce \emph{alternating approximate simulation relation} between state-transition models,
which is inspired from alternating approximate bisimulation
\cite{pola2009}.
}
{
In order to study the relationship between the symbolic model ${\SSS}_{\eta}(\Sigma_{\tau, \ell, \mu})$ and the state-transition model $\model(\Sigma_{\tau, \ell, \mu})$ of the system $\Sigma_{\tau, \ell, \mu}$,
%reduce the controller synthesis problem to the synthesis of a symbolic controller, 
%Before further investigating symbolic controllers,
we introduce a notion of \emph{alternating approximate simulation relation} between state-transition models.}
%, which is a relation between two transition models.
\begin{defn} \label{defn: simulate}
  Given a pair of transition models $M_1 = (Q_1, Q_{1,\ini}, \VVV,
  \Delta_1)$ and $M_2 = (Q_2, Q_{2,\ini}, \VVV, \Delta_2)$, a metric
  $d: Q_1 \times Q_2 \to \Rnonneg$, and a precision $\varepsilon \in
  \Rnonneg$, $M_1$ alternating \emph{$\varepsilon$-approximately
  simulates} $M_2$ if the following holds:
  \switchVersion{shortVersion}{
  \begin{enumerate}
    \item $\forall x \in Q_{1,\ini}$, $\exists q \in Q_{2,\ini}$ such that $d(x,q) \leq \varepsilon$, and
    \item    
    $\forall (x, q) \in Q_1 \times Q_2$ such that $d(x,q) \leq \varepsilon$,\\
    $\forall u \in \VVV$ such that $\exists (q,u,q') \in \Delta_2$,
    \begin{enumerate}
    \item $\exists (x,u,x') \in \Delta_1$,
    \item $\forall x' \in Q_1$ such that $(x,u,x') \in \Delta_1$,\\ 
    	$\exists (q,u,\tilde{q}) \in \Delta_2$ such that $d(x',\tilde{q}) \leq \varepsilon$.
    \end{enumerate}
  \end{enumerate}
  }{
  \begin{enumerate}
    \item for all $q \in Q_{1,\ini}$, there exists $p \in Q_{2,\ini}$
      with $d(q,p) \leq \varepsilon$,
    \item for each pair $(q, p) \in Q_1 \times Q_2$ such that $d(q,p)
      \leq \varepsilon$, for all $u \in \VVV$ such that there is $(p,u,p') \in \Delta_2$, 
      there is $(q,u,\tilde{q}) \in \Delta_1$ and for all $q'$ with $(q,u,q') \in \Delta_1$, there is 
      $(p,u,p'') \in \Delta_2$ with $d(q',p'') \leq \varepsilon$.
  \end{enumerate}
  }
\end{defn}

If $M_1$ alternating $\varepsilon$-approximately simulates $M_2$, 
then, 
for any signal defined at a state of $M_2$, we have the same signal at the corresponding state of $M_1$. 
Moreover, any non-deterministic behaviour of $M_1$ is also present in $M_2$.
Thus, we can turn any controller of $M_2$ into one of $M_1$.

\begin{lem}\label{lem: approx_simulation} 
Using the metric $d: X \times Q\to \Rnonneg$ %$d: Q \times X\to \Rnonneg$
 given by $d(x,q) = \lVert x - q \rVert$, %$d(q,x) = \lVert q - x \rVert$, 
 $\model(\Sigma_{\tau, \ell, \mu})$ alternating $\eta-$approximately simulates the transition model 
${\SSS}_{\eta}(\Sigma_{\tau, \ell, \mu})$.
\end{lem}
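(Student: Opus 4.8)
The plan is to verify the two clauses of Definition~\ref{defn: simulate} directly, taking $M_1 = \model(\Sigma_{\tau, \ell, \mu})$, $M_2 = {\SSS}_{\eta}(\Sigma_{\tau, \ell, \mu})$, precision $\varepsilon = \eta$, and $d(x,q) = \lVert x - q \rVert$. The geometric fact underlying everything is that the points $\tilde q$ with $\tilde q_i = 2\eta l_i$ ($l_i \in \Z$) form a lattice of mesh $2\eta$, so every $z \in \R^n$ admits a lattice point $\tilde q$ with $\lVert z - \tilde q \rVert \leq \eta$, obtained by rounding each coordinate to the nearest multiple of $2\eta$. I will also use the defining relation $x \in \xi^\bw_{x',u}(t) \iff x' \in \xi^\fw_{x,u}(t)$ between backward and forward reachability, which is the intended meaning of $\xi^\bw$ (``the states from which the system can reach $x'$'').

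For clause~1, I take $x \in X_\ini$ and round it to the nearest lattice point $q$, so $\lVert x - q\rVert \leq \eta$. Since $x \in \Ball{q}{\eta} \cap X_\ini$, this intersection is nonempty, hence $q \in \Dis{X_\ini}{\eta} = Q_\ini$ by~\eqref{eq: quantise bigger}, which supplies the required close initial state.

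For clause~2, I fix $(x,q)$ with $\lVert x - q\rVert \leq \eta$ and a signal $u$ admitting some transition $(q,u,q') \in \delta$. For part~(a), the first condition of Definition~\ref{defn: symbolic model} (valid because $(q,u,q') \in \delta$ and $x \in \Ball{q}{\eta}$) gives $\xi^\fw_{x,u}(t) \subseteq X$ for all $t \leq \len(u)$; forward completeness yields some $x' \in \xi^\fw_{x,u}(\len(u))$, and the reachability relation then gives $x \in \xi^\bw_{x',u}(\len(u))$, so $(x,u,x') \in \Delta$ by Definition~\ref{defn: model}. For part~(b), I let $(x,u,x') \in \Delta$ be arbitrary; then $x' \in \xi^\fw_{x,u}(\len(u)) \subseteq X$, so I round $x'$ to a lattice point $\tilde q$ with $\lVert x' - \tilde q\rVert \leq \eta$, and $x' \in \Ball{\tilde q}{\eta} \cap X$ gives $\tilde q \in Q$. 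It remains to check $(q,u,\tilde q) \in \delta$ against the three conditions of Definition~\ref{defn: symbolic model}: condition~1 is inherited from $(q,u,q') \in \delta$ since it depends only on $q$ and $u$; for condition~2, I pick any $x'' \in \xi^\fw_{q,u}(\len(u))$ (nonempty by completeness) and estimate $\lVert x'' - \tilde q\rVert \leq \lVert x'' - x'\rVert + \lVert x' - \tilde q\rVert \leq \beta^\fw_u(\eta,\len(u)) + \eta$, where condition~2.1 of Definition~\ref{defn: inc fw and bw complete} applied to $x$ and $q$ together with monotonicity of $\beta^\fw_u$ in $\lVert x - q\rVert \leq \eta$ controls the first summand; condition~3 is symmetric, using $x \in \xi^\bw_{x',u}(\len(u))$, $\lVert x' - \tilde q\rVert \leq \eta$, and condition~2.2 of Definition~\ref{defn: inc fw and bw complete} to bound $\lVert x_* - q\rVert$ for any $x_* \in \xi^\bw_{\tilde q, u}(\len(u))$.

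The lattice-rounding and triangle-inequality estimates are routine; the step requiring the most care is recognising that the two quantitative conditions of Definition~\ref{defn: symbolic model} are matched exactly by the two incremental-completeness bounds of Assumption~\ref{asm: inc fw and bw complete}, once the witnesses $x'$ (forward) and $x$ (backward) are transferred from the $x$-dynamics to the $q$-dynamics. I expect the main conceptual obstacle to be the bookkeeping of forward-versus-backward witnesses in part~(a), since $\Delta$ is defined by a conjunction of a forward and a backward membership, and establishing the backward one relies on the duality between $\xi^\fw$ and $\xi^\bw$.
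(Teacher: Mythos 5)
Your proof is correct and follows essentially the same route as the paper's: clause 2(b) is handled by rounding $x'$ to a lattice point $\tilde q$ with $\lVert x'-\tilde q\rVert\leq\eta$, inheriting condition 1 of Definition~\ref{defn: symbolic model} from $(q,u,q')\in\delta$, and establishing conditions 2 and 3 via the triangle inequality together with the bounds $\beta^\fw_u,\beta^\bw_u$ of Assumption~\ref{asm: inc fw and bw complete} and their monotonicity. The only difference is explicitness: where the paper calls clause 1 ``obvious'' and says clause 2(a) ``follows from Assumption~\ref{asm: inc fw and bw complete}'', you spell out the lattice rounding and the forward/backward duality $x\in\xi^\bw_{x',u}(t)\iff x'\in\xi^\fw_{x,u}(t)$ on which those claims tacitly rely.
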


\begin{proof}%(sketched)
  The first condition of Definition \ref{defn: simulate} is obvious.
Condition 2)-a) follows from Assumption~\ref{asm: inc fw and bw complete}. %: $\Sigma$ is forward and backward complete.  
%Condition 2)-a) follows from Definition~\ref{defn: symbolic model}.
For 2)-b), we consider 
$\lVert x - q \rVert \leq \eta$ and $(q,u,q') \in\delta$, and assume that $(x,u,x') \in \Delta$.
%By Assumption~\ref{asm: inc fw and bw complete}, there exists $(q, u, x'') \in \Delta$.
Since $Q = \Dis{X}{\eta}$, there exists $\tilde{q} \in Q$ such that $\lVert x' - \tilde{q} \rVert \leq \eta$.
We will show that $(q, u, \tilde{q}) \in \delta$.
Condition 1) of Definition~\ref{defn: symbolic model} holds by the fact that $(q,u,q') \in \delta$. 
Then, by Assumption~\ref{asm: inc fw and bw complete}, there exists $(q, u, \tilde{x}) \in \Delta$.
By Assumption~\ref{asm: inc fw and bw complete} and the triangular inequality,
\begin{align*}
\lVert \tilde{x} - \tilde{q} \rVert &\leq \rVert \tilde{x}- x' \lVert + \rVert x' - \tilde{q} \lVert \leq \beta_u^\fw(\lVert x - q \rVert, \len(u)) + \eta, %\text{ and}\\
%\lVert x - q \rVert &\leq \rVert \tilde{x}- x' \lVert + \rVert x' - \tilde{q} \lVert \leq \beta_u^\bw(\lVert \tilde{x} - \tilde{q} \rVert, \len(u))+ \eta,
\end{align*}
which proves condition 2) of Definition~\ref{defn: symbolic model}.
Condition 3) of Definition~\ref{defn: symbolic model} is shown in the same way.
%which proves conditions 2) and 3) of Definition~\ref{defn: symbolic model}.
Consequently,
$(q,u,\tilde{q}) \in \delta$ and therefore condition 2)-b) holds. %, which proves 2)-b).
\end{proof}

\subsection{Symbolic Controller Synthesis Problem} \label{subsection: problem symbolic}

\begin{figure}[t]
      \centering
      \includegraphics[scale=0.25]{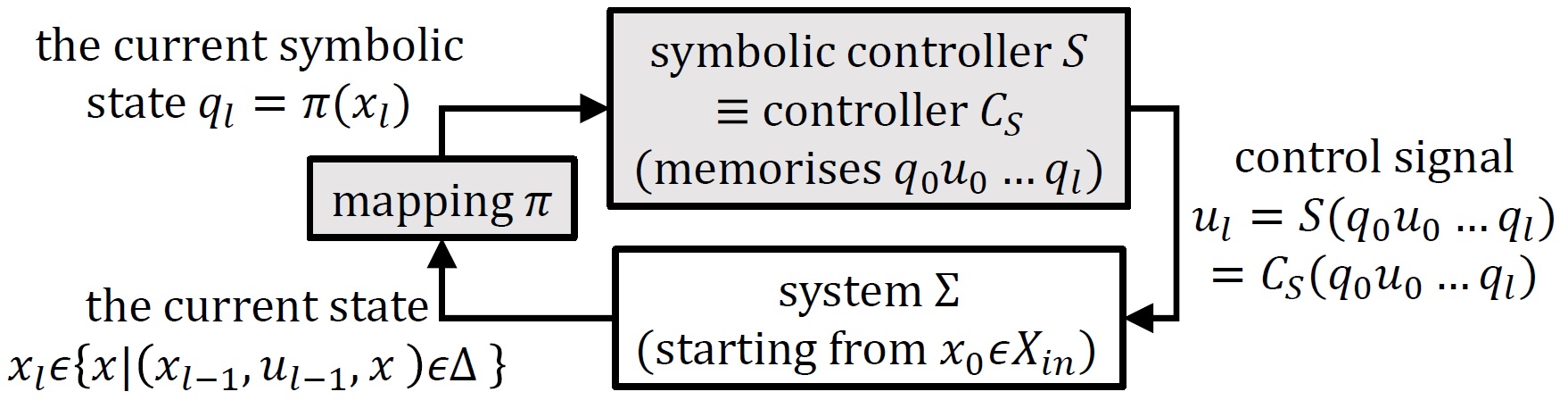}
      \switchVersion{shortVersion}{\squeezeupSmall}{}
      \caption{Overview of the symbolic self-triggered control process. %Based on previously observed system states and issued control signals, the controller issues a control signals to control the system state at its current state.
      }      \label{fig: symbolic control process}
      \switchVersion{shortVersion}{\squeezeupMid}{}
\end{figure}

In this section, we reduce the controller synthesis problem for the
system $\Sigma_{\tau, \ell, \mu}$ to the synthesis of 
a symbolic controller for ${\SSS}_{\eta}(\Sigma_{\tau, \ell, \mu})$. 
The overview of the symbolic control process is depicted in Fig.~\ref{fig: symbolic control process}.

By Lemma~\ref{lem: approx_simulation}, we can turn any symbolic controller
$S:\FRun({\SSS}_{\eta}(\Sigma_{\tau, \ell, \mu})) \to \UUU_{\tau, \ell, \mu}$
into a controller $\C_S: \FRun(\model(\Sigma_{\tau, \ell, \mu})) \to \UUU_{\tau, \ell, \mu}$. 
%First, 
%let us show how we construct a controller 
%$\C_S: \FRun(\model(\Sigma_{\tau, \ell, \mu})) \to \UUU_{\tau, \ell, \mu}$  from a symbolic controller
%$S:\FRun({\SSS}_{\eta}(\Sigma_{\tau, \ell, \mu})) \to \UUU_{\tau, \ell, \mu}$.
More precisely,
let $\pi: X \to Q$ be a mapping such that $x \in \Ball{\pi(x)}{\eta}$.
Then,
for each run $r = x_0 u_1 x_1 \ldots u_l x_l \in  \FRun(\model(\Sigma_{\tau, \ell, \mu}))$,
if $\pi(r) = \pi(x_0) u_1 \pi(x_1) \ldots  u_l \pi(x_l)$, then we
assign $\C_S(r) = S ( \pi(r) )$. 
By Lemma~\ref{lem: approx_simulation}, $\pi(r)$ is a run, and so
$\C_S(r)$ is well defined for any $r \in \FRun(\model(\Sigma_{\tau, \ell, \mu}))$.

\begin{defn} \label{defn: problem symbolic}  
Given a system $\Sigma = (X, X_\ini, U, {\UUU}, \xi^\fw, \xi^\bw)$, a set $AP$ of atomic propositions, a function 
$P: X \to  \powerset(AP)$,
%$\PPP: X \times \R \to \powerset(AP)$, 
a $\logic$ path formula $\Phi$, 
 a threshold $\nu \in \Rpos$, and quantisation parameters $\tau, \ell, \mu, \eta$, 
the \emph{symbolic controller synthesis problem} consists in synthesising a symbolic controller
$S:\FRun({\SSS}_{\eta}(\Sigma_{\tau, \ell, \mu})) \to \UUU_{\tau, \ell, \mu}$ 
such that 
\begin{itemize}
%\item $\control{\C}{\Sigma} \vDash \Phi$ (i.e., $\traj \vDash \Phi, \forall\traj \in \Traj(\control{\C}{\Sigma})$), and 
  \item all finite runs in $\FRun(\control{C_S}{\model(\Sigma)})$ can be
    extended to an infinite run in $\IRun(\model(\Sigma))$,
\item $\traj \vDash \Phi$ for any $\traj \in \Traj(\control{\C_S}{\Sigma})$, and
\item 
$\displaystyle
\lim_{h \to \infty} \frac{1}{h} \sum_{i = 1}^ h \len(u_i) > \nu
$ for any $x_0 u_0 \ldots \in \IRun(\control{\C_S}{\model(\Sigma)})$.
\end{itemize}
or determine that such %a symbolic controller 
$S$ does not exist.
\end{defn}

By Lemma~\ref{lem: approx_simulation}, we have the following theorem.
\begin{thm}\label{thm: problem_reduction}
If a symbolic controller $S:\FRun({\SSS}_{\eta}(\Sigma_{\tau, \ell, \mu}))$ $\to \UUU_{\tau, \ell, \mu}$  solves the %symbolic controller synthesis 
problem of Definition~\ref{defn: problem symbolic},
then the controller $\C_S: \FRun(\model(\Sigma_{\tau, \ell, \mu})) \to \UUU_{\tau, \ell, \mu}$ 
solves the controller synthesis problem of Definition~\ref{defn: problem continuous}.
\end{thm}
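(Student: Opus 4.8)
The plan is to recognise that the three requirements placed on $\C_S$ in Definition~\ref{defn: problem symbolic} are literally the three requirements of Definition~\ref{defn: problem continuous} with the unknown controller $\C$ instantiated to $\C_S$. Thus, once I have checked that $\C_S$ is a bona fide controller of $\Sigma$ and that controlling $\Sigma$ by $\C_S$ yields the same runs and trajectories as controlling $\Sigma_{\tau, \ell, \mu}$ by $\C_S$, the theorem reduces to a direct definition-chase: the hypothesis that $S$ solves the symbolic problem is exactly the statement that $\C_S$ meets the no-deadlock, $\logic$, and mean-payoff conditions of Definition~\ref{defn: problem continuous}.

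First I would verify that $\C_S$ is a controller of $\Sigma$ as in Definition~\ref{defn: controller}. By construction $\C_S$ issues only signals in $\UUU_{\tau, \ell, \mu} \subseteq \UUU$, and by Lemma~\ref{lem: approx_simulation} the projected word $\pi(r)$ is a run of $\SSS_\eta(\Sigma_{\tau, \ell, \mu})$, so $\C_S(r) = S(\pi(r))$ is well defined on every $r \in \FRun(\model(\Sigma_{\tau, \ell, \mu}))$; extending $\C_S$ arbitrarily to the remaining finite runs of $\model(\Sigma)$ then makes it a total function $\FRun(\model(\Sigma)) \to \UUU$, and this extension is irrelevant since it is never queried on reachable runs.

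Next I would show $\IRun(\control{\C_S}{\model(\Sigma)}) = \IRun(\control{\C_S}{\model(\Sigma_{\tau, \ell, \mu})})$, and likewise for finite runs and trajectories. The key observation is that the transition relation of $\model(\Sigma_{\tau, \ell, \mu})$ is exactly the restriction to signals in $\UUU_{\tau, \ell, \mu}$ of the transition relation $\Delta$ of $\model(\Sigma)$, since both are defined by the same forward and backward reachability conditions of Definition~\ref{defn: model} through the same $\xi^\fw, \xi^\bw$. Because $\C_S$ selects only signals in $\UUU_{\tau, \ell, \mu}$, every transition it induces lies in both transition relations, so the two sets of controlled runs coincide; as trajectories are induced from runs through the same $\xi^\fw, \xi^\bw$, their trajectory sets coincide as well, and in particular $\Traj(\control{\C_S}{\Sigma}) = \Traj(\control{\C_S}{\Sigma_{\tau, \ell, \mu}})$.

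With these identifications the three conditions transfer directly, and the theorem follows. I do not anticipate a substantive difficulty: the content is essentially definitional, and the only point requiring care is the bookkeeping of the previous paragraph, namely confirming that restricting to piecewise-constant signals in $\UUU_{\tau, \ell, \mu}$ leaves the reachable behaviour of the controlled system unchanged, so that the conditions evaluated under $\control{\C_S}{\Sigma}$ and under $\control{\C_S}{\Sigma_{\tau, \ell, \mu}}$ agree.
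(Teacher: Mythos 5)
Your proposal is correct and takes essentially the same route as the paper: the paper derives the theorem directly from Lemma~\ref{lem: approx_simulation}, which gives well-definedness of $\C_S$ on $\FRun(\model(\Sigma_{\tau, \ell, \mu}))$, after which the three conditions of Definition~\ref{defn: problem symbolic} are, by construction, exactly those of Definition~\ref{defn: problem continuous} instantiated at $\C = \C_S$. Your extra bookkeeping (extending $\C_S$ to all of $\FRun(\model(\Sigma))$ and checking that $\control{\C_S}{\model(\Sigma)}$ and $\control{\C_S}{\model(\Sigma_{\tau, \ell, \mu})}$ generate the same runs and trajectories) only makes explicit what the paper leaves implicit.
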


\switchVersion{shortVersion}{}{
\begin{remark}
The converse of Theorem \ref{thm: problem_reduction} is not necessarily true, and there are models
$\model(\Sigma_{\tau, \ell, \mu})$ that have controllers $C$ that
solve the controller synthesis problem of
Definition~\ref{defn: problem continuous}, but no symbolic controller
solves the problem of Definition~\ref{defn: problem symbolic}.
\end{remark}
}

%%%%%%%%%%%%%%%%%%%%%%%%%%%%%%%%%%%%%%%%%%%%%%%%%%%%%%%%%%%%%%%%%%%%%%%%%%%%%%%%
\section{Control Algorithm}\label{sec: algo}
The overview of the proposed control algorithm is presented in Fig.~\ref{fig: flowchart}.
From the given system $\Sigma$, the signal-length interval $\ell$, and initial quantisation parameters $\eta=\eta_0, \mu = \mu_0, \tau= \tau_0$, we construct the symbolic model ${\SSS}_{\eta}(\Sigma_{\tau, \ell, \mu})$.
Then, we transform the symbolic control problem to a threshold problem of a mean-payoff parity game. 
If there exists a wining strategy of the controller for the game, 
the algorithm translates the strategy to a symbolic controller and terminates.
Otherwise, the algorithm refines the quantisation parameters (e.g., setting $\eta= \frac{\eta}{2}$, or $\mu= \frac{\mu}{2}$, or $\tau= \frac{\tau}{2}$)
and repeats the process.
The algorithm terminates without solving the problem when the parameters $\eta, \mu, \tau$ are smaller than some given thresholds.
\begin{figure}[t]
      \centering
      \includegraphics[scale=0.27]{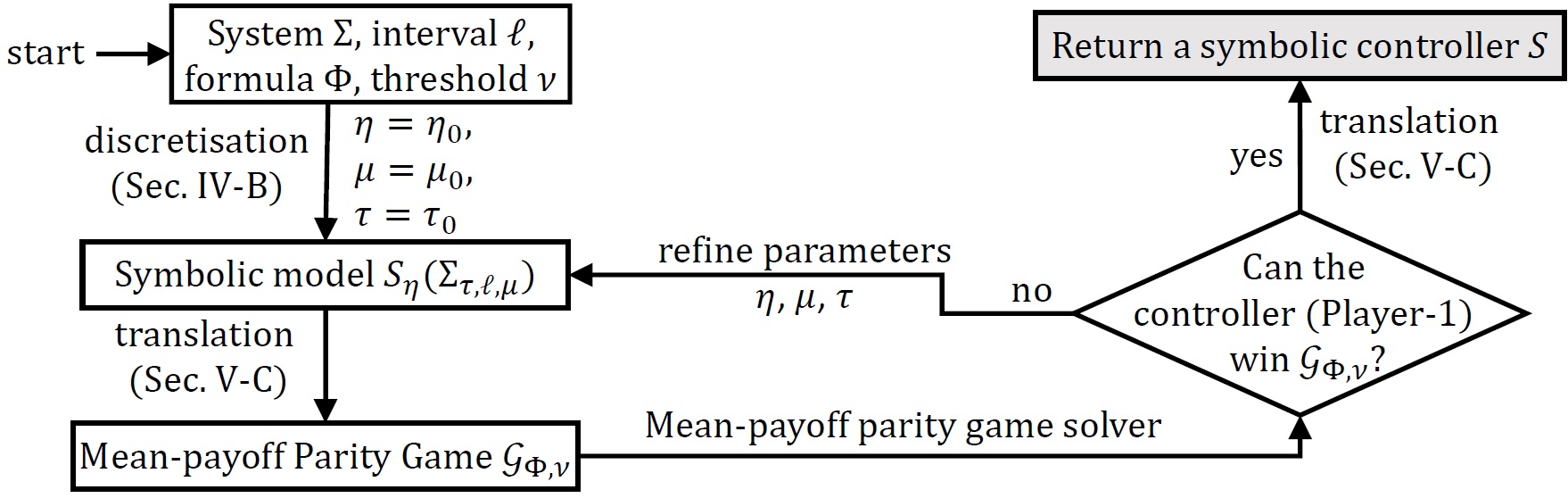}
      \switchVersion{shortVersion}{\squeezeupSmall}{}
      \caption{Overview of the proposed control algorithm.}
      \label{fig: flowchart}
      \switchVersion{shortVersion}{\squeezeupMid}{}
\end{figure}

\subsection{Mean-payoff Parity Game}
%Let us start by recalling known results about
% ~\cite{something}.
%\textcolor{red}{Intuition of why we need MPPGs.}
Let us first start by recalling known results about \emph{\mppgs{} (MPPGs)}, 
which we use for solving the symbolic controller synthesis problem.
We invite an interested reader to see~\cite{chaterjee05} for more details about MPPGs. %about \mppgs{}.
\begin{defn}%[\Mppgs{}]
  A \emph{\mppg{}} is a tuple {$\GGG = \big(G=(V = V_1 \sqcup V_2, E, s
      \colon E \to V, t \colon E \to V), \lambda, c, \nu \big)$}, where 
  %{\switchVersion{shortVersion}{\tinydisplayskip
  \begin{itemize}
    \item $G=(V = V_1 \sqcup V_2, E, s
      \colon E \to V, t \colon E \to V)$ is a directed graph.
      	$V$ is partitioned into two disjoint sets $V_1$ and $V_2$ 
      	 of vertices for \emph{Player-1} and \emph{Player-2}, respectively.
      	 $E$ is its set of edges. Functions $s$ and $t$ map edges to their sources and targets.
    \item $\lambda \colon E \to \N$ maps each edge to its
      \emph{payoff}.
    \item $c \colon V \to \N$ maps each vertex to its \emph{colour}.
    \item $\nu \in \N$ is a given \emph{mean-payoff threshold}.
  \end{itemize}%}{}}
\end{defn}
  A \emph{play} on $\GGG$ is an infinite sequence 
  %$\omega = v_0 \xto{e_0} v_1 \xto{e_1} \ldots$ 
  $\omega = v_0 e_0 v_1 e_1 \ldots \in (VE)^\omega$
  such that, for all $i \geq 0$, $s(e_i)
  = v_i$ and $t(e_i) = v_{i+1}$.
  A \emph{finite play} is a finite sequence in $V(EV)^*$ defined in the same way.
  Let $\FPlay$ be the set of all finite plays, and
  $\FPlayi$ and $\FPlayii$ be the
  sets of finite plays ending with a vertex in $V_1$ and $V_2$, respectively.
Both players play the game by selecting strategies.
A strategy of Player-$i$ is a partial function
  $\sigma_i \colon \FPlay_i \rightharpoonup E$ such that $s(\sigma_i(v_0 e_0 \ldots
   v_n)) = v_n$, i.e., $\sigma_i$ chooses an edge whose
  source is the ending vertex of the play if such an edge exists,
  and does not choose any edge otherwise.
A play $\omega = v_0 e_0 v_1 e_1 \ldots$ is consistent with $\sigma_i$ %a strategy $\sigma_i$ of Player-$i$
if $e_j = \sigma_i(v_0 e_0 \ldots v_j)$ for all $v_j \in V_i$.
For an initial vertex $v$ and a pair of strategies $\sigma_1$ and $\sigma_2$ of both players,
there exists a unique play, denoted by $\mathit{play}(v, \sigma_1,\sigma_2)$, consistent with both  $\sigma_1$ and $\sigma_2$. This play may be finite if a player cannot choose an 
edge.
%Let us denote the unique play by $\mathit{play}(v, \sigma_1,\sigma_2)$

For an infinite play $\omega = v_0 e_0 v_1 e_1 \ldots$,
we denote the maximal colour that appears infinitely often in the 
sequence $c(v_0) c(v_1) \ldots$ by $\Inf(\omega)$.
Then, the \emph{mean-payoff value} of the play $\omega$ is  
$
\MP(\omega) = \lim_{n \to \infty} \frac{1}{n} \sum_{i = 1}^n \lambda(e_i)
$.
A vertex $v \in V$ is \emph{winning} for Player-1 
%(for the threshold problem)
if there exists a strategy
$\sigma_1$ of Player-1 such that,
for any strategy $\sigma_2$ of Player-2, 
$\mathit{play}(v, \sigma_1,\sigma_2)$ is infinite,
$\Inf(\mathit{play}(v, \sigma_1,\sigma_2))$ is even and 
\switchVersion{shortVersion}{$MP(\mathit{play}(v, \sigma_1,\sigma_2))>\nu$.}{$MP(\mathit{play}(v, \sigma_1,\sigma_2))$ is greater than the mean-payoff threshold $\nu$.}
Such a strategy $\sigma_1$ is called a \emph{winning strategy} for Player-1 from the vertex $v$.
Then, the \emph{threshold problem} \cite{Daviaud2018pseudo} is 
to compute the set of winning vertices of Player-1 for a given MPPG.

%\begin{thm}[\cite{Daviaud+18}]
%  There is a pseudo-quasi-polynomial algorithm to solve the threshold
%  problem on mean-payoff parity games.
%  \label{thm:mppg-solve}
%\end{thm}

In \cite{Daviaud2018pseudo}, the authors propose a pseudo-quasi-polynomial algorithm that solves the threshold
problem and computes a winning strategy for Player-1 from each winning state. %The algorithm also computes a winning strategy of Player-1.
In Section \ref{subsec: problem trans}, 
we reduce the symbolic control problem to the synthesis of a winning strategy on an MPPG,
%from initial states.
which can be solved using this algorithm.
%such that, for any
%counter-strategy $\pi$, $\MP_1(\omega_{\sigma,\pi}(v)) > \theta$ for
%vertices $v$ for which such a strategy exists.
%\textcolor{red}{sentence not clear at all (also not sure it's true)}

\subsection{Atomic Propositions along Symbolic Transitions}

We introduce functions $\pmay,\pmust:\delta \to \powerset(
\powerset(AP) \times \powerset(AP))$
%, where $\neg AP = \{\neg p\mid p \in AP\}$, 
to under-approximate the set of atomic propositions
that hold along trajectories.
This is needed because the information about which states are visited
along a trajectory is lost in the discrete model.
These functions help recover part of this information, and will be
crucial in the problem translation in
Section~\ref{subsec: problem trans}.

For each transition $(q,u,q') \in \delta$ in the symbolic model ${\SSS}_{\eta}(\Sigma_{\tau, \ell, \mu})$, we require that
for $\Finv \in \ens{\forall, \exists}$
{\switchVersion{shortVersion}{\tinydisplayskip
\begin{align} 
\rho_\Finv(q,u,q') \subseteq &\big\{ (P^+,P^-) \in \powerset(AP) \times
    \powerset(AP) \bigm\vert 
   \forall x \in \Ball{q}{\eta}, \nonumber\\
&   \forall x' \in \Ball{q'}{\eta},
   \forall \sigma \in \FTraj(\Sigma_{\tau, \ell, \mu}, x,
    (x u x')),  \label{eq pmay}\\
&  \Finv t \leq \len(u),
  P^+ \subseteq P(\sigma(t)) \wedge P^- \cap
  P(\sigma(t)) = \emptyset \big\}\text{.}\nonumber
\end{align}
}{
\begin{align}
  \rho_\Finv(q,u,q') \subseteq &\big\{ (P^+,P^-) \in \powerset(AP) \times
    \powerset(AP) \bigm\vert \nonumber\\
  & \forall x \in \Ball{q}{\eta}, \forall x' \in \Ball{q'}{\eta},
    \nonumber\\
  & \forall \sigma \in \FTraj({\SSS}_{\eta}(\Sigma_{\tau, \ell, \mu}), x,
    (q u q')), \label{eq pmay}\\
  & \Finv t \leq \len(u), \nonumber 
  %\Traj({\SSS}_{\eta}(\Sigma_{\tau, \ell, \mu}), q),
  P^+ \subseteq P(\sigma(t)) \wedge P^- \cap
  P(\sigma(t)) = \emptyset \big\}\text{.} \nonumber
\end{align}}}
% \begin{align}
%   \pmust(q,u,q') \subseteq &\big\{ (Q^+,Q^-) \in \powerset(AP) \times
%     \powerset(\neg AP) \bigm\vert \nonumber\\
%   & \forall x \in \Ball{q}{\eta}, \forall x' \in \Ball{q'}{\eta},
%     \nonumber\\
%   & \forall \sigma \in F\Traj({\SSS}_{\eta}(\Sigma_{\tau, \ell, \mu}), x,
%     (q u q')), \label{eq pmust}\\
%   & \forall t \leq \len(u), \nonumber\\
%   %\Traj({\SSS}_{\eta}(\Sigma_{\tau, \ell, \mu}), q),
%   & Q^+ \subseteq P(\sigma(t)) \wedge Q^- \cap
%   P(\sigma(t)) = \emptyset \big\}\rlap{.} \nonumber
% \end{align}
The intuition is as follows. If $(P^+,P^-) \in \rho_\forall(q,u,q')$ (\emph{resp.} $\rho_\exists(q,u,q')$),
then, at all time (\emph{resp.} at some time) along the transition $(q,u,q')$, 
all $p \in P^+$ hold and no $p \in P^-$ holds.
%For a state formula $\phi$, %for $\Finv \in \ens{\forall, \exists}$, 
Then, we can define $\rho_\Finv(q,u,q') \vDash \phi$ inductively on the state formula $\phi$ in a sound way.
%We omit the details to conserve space.

For the implementation, 
we use functions $B^+, B^-: X \times \R \to \powerset(AP)$  such that, 
for any state $x \in X$ and any radius $r \in \Rpos$, 
 $B^+(x,r) = \{ p \in AP \mid  \forall x' \in \Ball{x}{r}, x' \vDash p\}$
 and
 $B^-(x,r) = \{ p \in AP \mid  \forall x' \in \Ball{x}{r}, x' \nvDash p\}$ 
%\begin{equation*}
%$\PPP(x,r) = \big\{ \phi \in AP \cup \neg AP \bigm\vert \forall x' \in \Ball{x}{r}, x' \vDash \phi\big\}$ 
%Namely, $\PPP(x,r)$ 
are the sets of atomic propositions %and their negations
that are satisfied and not satisfied, respectively, at all states in the ball $\Ball{x}{r}$.
\begin{asm}\label{asm: P}
For any state $x \in X$ and any radius $r \in \Rpos$, the sets $B^+(x,r)$ and $B^-(x,r)$ can be computed.
\end{asm} 

Then, 
we may use the following functions $\pmay$ and $\pmust$.
{\switchVersion{shortVersion}{\tinydisplayskip}{}
\begin{align*}
\pmay(q,u,q') = &\big(B^+(q,\eta) \cup B^+(q',\eta), B^-(q,\eta) \cup B^-(q',\eta)\big).\\
\pmust(q,u,q') = &\big(B^+(q,r) \cap B^+(q',r), B^-(q,r) \cap B^-(q',r)\big),\\
					&\text{ where } r =\beta^\fw_u(q, \len(u))+ \alpha^\fw_u(q,\len(u)). 
\end{align*}}
By Assumptions \ref{asm: inc fw and bw complete} and \ref{asm: Lipchitz}, $\pmay$ and $\pmust$
satisfy Equation~\eqref{eq pmay}.
%Note that, since $\pmay$ and $\pmust$ return under-approximations, we do not allow negations of $\logic$ \emph{path} formula.
%Notice that $p \notin \PPP(x,r)$ does not imply $\neg p \in \PPP(x,r)$.
%Therefore, we do not allow negation in a path formula of $\logic$.

\subsection{Problem Translation to Mean-payoff Game}\label{subsec: problem trans}
In this section, we present a translation from the symbolic model 
${\SSS}_\eta(\Sigma_{\tau,\ell,\mu}) = (Q,Q_\ini, \UUU_{\tau,\ell,\mu}, \delta)$, a path formula $\Phi$, and a threshold $\nu \in \Rnonneg$ to an MPPG $\GGG_{\Phi,\nu} = (G_\Phi, \lambda_\Phi, c_\Phi, \nu)$.
The MPPG is played between the controller (as Player-1) and the non-determinism of the system (as Player-2).
%We omit the details of the translation and  only give a brief explanation to conserve space.
The parity constraint will force the controller to induce trajectories
that satisfy $\Phi$, while the mean-payoff constraint will ensure that
the average length of the chosen input signals is above the threshold.
The translation is roughly as illustrated in
Fig.~\eqref{fig:dis_to_game}: Player-1 can move from state $q$ to
state $(q,u)$ (corresponding to choosing input signal $u$), then
Player-2 can choose to go to any $q_i$ reachable from $q$ following
$u$ (corresponding to a non-deterministic environmental behaviour).
The costs on the edges are such that the mean payoff is equal to the
average signal length.
Finally, the nodes' colours are defined inductively on $\Phi$.
\begin{figure}
  \begin{center}
  \resizebox{.3\textwidth}{!}{
    \begin{tikzpicture}
      \node[draw] (l) at (0,0) {
        \begin{tikzpicture}
          \node (ll) at (0,0) {$q$};
          \node (r1) at ($(ll.east)+(1.3,0.6)$) {$q_1$};
          \node (rn) at ($(ll.east)+(1.3,-0.6)$) {$q_n$};
          \node () at ($(r1.center)!0.5!(rn.center)$) {$\vdots$};
          \path[->] (ll) edge node[above] {\small$u$} (r1)
                         edge node[below] {\small$u$} (rn);
        \end{tikzpicture}
      };
      \node[draw,anchor=west] (r) at ($(l.east)+(1.3,0)$) {
        \begin{tikzpicture}
          \node[draw] (ll) at (0,0) {$q$};
          \node[draw,circle] (m) at ($(ll.east)+(1.3,0)$) {$q,u$};
          \node[draw] (r1) at ($(m.east)+(1.3,0.6)$) {$q_1$};
          \node[draw] (rn) at ($(m.east)+(1.3,-0.6)$) {$q_n$};
          \node () at ($(r1.center)!0.5!(rn.center)$) {$\vdots$};
          \path[->] (ll) edge node[above] {\small$\len(u)$} (m)
                    (m) edge node[above] {\small$\len(u)\ \ $} (r1)
                        edge node[below] {\small$\len(u)\ \ $} (rn);
        \end{tikzpicture}
      };
      \node () at ($(l.east)!0.5!(r.west)$) {$\leadsto$};
    \end{tikzpicture}
    }
  \end{center}
  \switchVersion{shortVersion}{\squeezeupSmall}{}
  \caption{Translation to a mean-payoff parity game}
  \label{fig:dis_to_game}
  \switchVersion{shortVersion}{\squeezeupMid}{}
\end{figure}
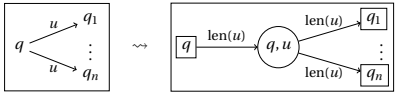

More precisely, we first define a
%directed %(bipartite)
graph
$G_\Sigma = (V_\Sigma = V_1 \sqcup V_2, E_\Sigma = E_{1\to 2} \sqcup E_{2 \to 1}, s, t)$
and a function $\lambda: E \to \N$ as follows, which corresponds to
what is shown in Fig.~\eqref{fig:dis_to_game}:
\begin{itemize}
  \item $V_1 = Q$ and $V_2 = Q \times \UUU_{\tau,\ell,\mu}$,
  \item $E_{1 \to 2} = Q \times \UUU_{\tau,\ell,\mu}$ and $E_{2 \to 1}
    = \big\{ ((q,u),q') | (q ,u, q') \in \delta \big\}$,
  \item $\forall e = (q,u) \in E_{1 \to 2}$, $q \xto{e} (q,u)$ and
    $\lambda(e) = \len(u)$,
  % \item for each edge $e = (q,u) \in E_{1 \to 2}$, we have $s(e) = q
  %   \in V_1$, $t(e) = (q,u) \in V_2$, and $\lambda(e) = \len(u)$,
  \item $\forall e = ((q,u),q') \in E_{2 \to 1}$, $(q,u) \xto{e}
    q'$ and $\lambda(e) = \len(u)$.
  % \item for each $e = ((q,u),q') \in E_{2 \to 1}$, we have $s(e) =
  %   (q,u) \in V_2$, $t(e) = q' \in V_1$, and $\lambda(e) = \len(u)$.
\end{itemize}
% Intuitively, $V_1$ contains the vertices of the controller.
% An edge $(q,u) \in E_{1 \to 2}$ represents the case
% where the current symbolic state is $q$ and the controller issues the control signal $u$.
% Then, an edge $((q,u),q') \in E_{2 \to 1}$ means that the controller observes a state $x$
% that is  mapped to the symbolic state $q'=\pi(x)$ (see Section~\ref{subsection: problem symbolic}).
% The edge $((q,u),q') \in E_{2 \to 1}$ is played by Player-2, which represents the non-determinism of the system, at the vertex $(q,u) \in V_2$.
This will form the base of our game $\GGG_{\Phi,\nu}$, which will
roughly consist of multiple copies of $G_\Sigma$, labelled with
different colours, built inductively from $\Phi$.
Technically, we define $\GGG_{\Phi,\nu} = ((Z_\Phi \times V_\Sigma,
Z_\Phi \times E_\Sigma, s_\Phi, t_\Phi), \tilde\lambda, c_\Phi, \nu)$,
where $\tilde\lambda(z,e) = \lambda(e)$, $(z,q) \xto{(z,(q,(q,u)))}
(z,(q,u))$, and $(z,(q,u)) \xto{(z,((q,u),q'))} (z',q')$ for some $z'
\in Z_\Phi$.
The number of copies $Z_\Phi$, the colour function $c_\Phi$, and the target
copy $z'$ are defined inductively.

%\begin{itemize}
%  \item
    For the base cases $\Phi = \Diamond\phi, \Box\phi, \Diamond\Box\phi,
    \Box\Diamond\phi$, we only present $\Box\Diamond \phi$, as the
    other cases are similar.
    We need two copies $Z_\Phi = \{1,2\}$ of $G_\Phi$, labelled with
    colours $c_\Phi(z,v) = z$.
    Finally, $z'$ is $2$ if $\pmay(q,u,q') \vDash \phi$, and $1$
    otherwise.
    The intuition is that we jump to a state in copy $2$
    if we can ensure that there is a state satisfying
    $\phi$ on the trajectory leading to that state, and jump to a
    state in copy $1$ otherwise.
%    Other than that, the dynamics of the model is left unchanged.
    From there, it is clear that if we can find a strategy that visits
    states in copy $2$ infinitely often (the winning condition for
    parity), then we can force the discrete model to output
    trajectories that satisfy $\Phi$.

%  \item
    For $\Phi \vee \Psi, \Phi \wedge \Psi$, we
    synchronise parity automata by remembering, for
    each colour (say in $\GGG_{\Psi,\nu}$), the max colour (of
    $\GGG_{\Phi,\nu}$) seen during the execution since a larger colour
    has been seen.
    This allows us to compute the desired $c_{\Phi \vee \Psi}$ and
    $c_{\Phi \wedge \Psi}$. 
    \switchVersion{shortVersion}{}{
    Note that the size of $X_{\Phi \vee\Psi}$ and $X_{\Phi \wedge \Psi}$
    only depend on $X_\Phi$, $X_\Psi$, and the number of colours in
    each MPPG.}
%    Note that, because our system is the same on both sides
%    of the connector (only the specification changes), there is no
%    need to multiply the state space, and the size of $X_{\Phi \vee
%    \Psi}$ only depends on $X_\Phi$ and $X_\Psi$ (and not $V_1$).
%\end{itemize}
%In total, the size of $X_\Phi$ is in $O\bigl(4^{\lvert\Phi\rvert\cdot\lvert AP\rvert}\bigr)$
%{\color{blue}???}, and the size of $\GGG_{\Phi,\nu}$ is in 
%$O\bigl(\lvert Q \rvert^2\cdot\lvert\UUU_{\tau,\ell,\mu}\rvert\cdot\lvert X_\Phi\rvert\bigr)$.
%{\color{blue} essentially the size of $\GGG_{\Phi,\nu}$ is given by $E_{2 \to 1}$.
%Can we do better?}

\begin{thm}
  From a winning strategy $\sigma$ for Player-1 in $\GGG_{\Phi,\nu}$,
  one can effectively compute a symbolic controller $C_\sigma$ for
  $\SSS_\eta(\Sigma_{\tau,\ell,\mu})$ that solves the symbolic 
  controller synthesis problem of Definition~\ref{defn: problem symbolic}.
\end{thm}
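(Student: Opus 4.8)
The plan is to read a symbolic controller $C_\sigma$ off the winning strategy $\sigma$ (assumed winning from every initial vertex corresponding to $Q_\ini$), and then to verify the three requirements of Definition~\ref{defn: problem symbolic} by transporting them, through a play/run correspondence and Lemma~\ref{lem: approx_simulation}, from $\GGG_{\Phi,\nu}$ to the trajectories of the continuous controlled system. The key observation is that every finite run $r = q_0 u_0 q_1 \ldots q_l$ of $\SSS_\eta(\Sigma_{\tau,\ell,\mu})$ lifts to a unique finite play $\omega(r) \in \FPlay_1$: starting from the initial copy, one alternately inserts the Player-1 vertices $(z_i,q_i)$ and the Player-2 vertices $(z_i,(q_i,u_i))$, where $z_{i+1}$ is computed from $z_i$ and the transition $(q_i,u_i,q_{i+1})$ by the inductive transition map of $Z_\Phi$. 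Since this map is deterministic in the transition, $\omega(r)$ is well defined, and I set $C_\sigma(r)$ to be the unique input $u$ with $\sigma(\omega(r)) = \big((z_l,q_l) \to (z_l,(q_l,u))\big)$ (extending $C_\sigma$ arbitrarily on runs not reachable under it). As $\sigma$ is winning, every consistent play is infinite, so $\sigma$ is defined at every Player-1 vertex these runs reach; by construction the runs of $\control{C_\sigma}{\SSS_\eta(\Sigma_{\tau,\ell,\mu})}$ are exactly the $V_\Sigma$-projections of the $\sigma$-consistent plays, Player-2's moves ranging over the symbolic successors in $\delta$. I write $\C_{C_\sigma}$ for its induced continuous controller (Section~\ref{subsection: problem symbolic}).

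For Condition~1, infiniteness of the consistent plays means the controller always proposes an input enabled in $\delta$, hence (by the alternating simulation of Lemma~\ref{lem: approx_simulation}) enabled in $\model(\Sigma_{\tau,\ell,\mu})$, so every finite run of $\control{\C_{C_\sigma}}{\model(\Sigma)}$ extends; since $\UUU_{\tau,\ell,\mu}\subseteq\UUU$, the extension lies in $\IRun(\model(\Sigma))$. For Condition~3, both edges of round $i$, namely $q_i\to(q_i,u_i)$ and $(q_i,u_i)\to q_{i+1}$, carry payoff $\len(u_i)$, so the payoff of the first $2h$ edges is $2\sum_{i=1}^{h}\len(u_i)$; as $\len$ is bounded by $\ell_{\max}$, both the even- and odd-length running averages converge to $\lim_{h\to\infty}\frac{1}{h}\sum_{i=1}^{h}\len(u_i)$. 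Hence $\MP(\omega)=\lim_{h\to\infty}\frac{1}{h}\sum_{i=1}^{h}\len(u_i)>\nu$, which is exactly the threshold condition for the signals issued by $\C_{C_\sigma}$.

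Condition~2 is the crux. Fix $\traj\in\Traj(\control{\C_{C_\sigma}}{\Sigma})$, induced by a run $x_0 u_0 x_1\ldots$ of $\control{\C_{C_\sigma}}{\model(\Sigma)}$; its projection $\pi(x_0) u_0 \pi(x_1)\ldots$ is a run of $\control{C_\sigma}{\SSS_\eta(\Sigma_{\tau,\ell,\mu})}$ by Lemma~\ref{lem: approx_simulation}, hence the projection of a $\sigma$-consistent play $\omega$, so $\Inf(\omega)$ is even. I would then prove by induction on $\Phi$ that even parity of $\omega$ forces $\traj\vDash\Phi$. The bridge to the continuum is that the segment of $\traj$ along transition $i$ is a finite trajectory of $(\pi(x_i),u_i,\pi(x_{i+1}))$ in the sense of Equation~\eqref{eq pmay}, because $x_i\in\Ball{\pi(x_i)}{\eta}$ and $x_{i+1}\in\Ball{\pi(x_{i+1})}{\eta}$. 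For the base cases the colour of each copy records, via $\pmay$ (for $\Diamond\phi,\Box\Diamond\phi$) or $\pmust$ (for $\Box\phi,\Diamond\Box\phi$), whether $\phi$ is guaranteed to hold at some time or at all times along a transition: by~\eqref{eq pmay}, $\pmay(q,u,q')\vDash\phi$ makes every such segment meet a $\phi$-state and $\pmust(q,u,q')\vDash\phi$ keeps it inside $\phi$-states. Thus for $\Box\Diamond\phi$ even parity means copy $2$ is entered infinitely often, and since each signal has length at least $\tau$, this places $\phi$-states at unboundedly large times, so $\traj\vDash\Box\Diamond\phi$; the other modalities are analogous. For $\Phi_1\vee\Phi_2$ and $\Phi_1\wedge\Phi_2$ the synchronised colouring reduces even parity of the product to the disjunction, respectively conjunction, of even parity of the factors, and the induction hypothesis closes the case.

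Conditions~1 and~3 and the base cases are routine, being immediate from the game construction, the positive lower bound $\tau$ on signal lengths, and Lemma~\ref{lem: approx_simulation}. The main obstacle is the inductive soundness of the colouring in Condition~2 — in particular that the max-colour bookkeeping of the synchronisation product makes $\Inf$ even exactly when the intended Boolean combination of the factors' parity conditions holds — together with checking that this purely discrete parity statement transfers to the continuous trajectory solely through~\eqref{eq pmay}. Once the product's combinatorial correctness is in place, the discreteness–continuity gap is entirely absorbed by~\eqref{eq pmay} and Lemma~\ref{lem: approx_simulation}.
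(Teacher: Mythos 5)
Your proposal is correct and follows essentially the same route as the paper's (two-sentence) proof sketch: $C_\sigma$ copies $\sigma$'s choice of input signals via the run-to-play lifting, the parity condition together with the soundness of $\pmay$/$\pmust$ from Equation~\eqref{eq pmay} yields satisfaction of $\Phi$, and the mean-payoff condition yields the average-signal-length threshold. You simply make explicit the details the paper leaves implicit (uniqueness of the lifted play, the factor-of-two bookkeeping for the payoff, and the synchronisation product for $\vee$/$\wedge$), all consistently with the paper's constructions.
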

\switchVersion{shortVersion}{
 \begin{proof} (sketch)
 $C_\sigma$ copies $\sigma$'s choice of input signals.
  The parity condition ensures that the
  controlled system satisfies $\Phi$, while the mean-payoff condition
  ensures that the average signal length is greater than the
  threshold.
\end{proof}
}{
\begin{proof} (sketch)
$C_\sigma$ copies $\sigma$'s choice of input signal.
  The parity condition ensures that any trajectory in the
  controlled system satisfies $\Phi$, while the mean-payoff part
  ensures that the average signal length is greater than the
  threshold.
%  $C_\sigma$ copies $\sigma$'s choice of input signal.
%  The parity condition ensures that any trajectory produced by the
%  controlled system will verify $\Phi$, while the mean-payoff part
%  ensures that the average signal length is greater than the
%  threshold.
\end{proof}}
%%%%%%%%%%%%%%%%%%%%%%%%%%%%%%%%%%%%%%%%%%%%%%%%%%%%%%%%%%%%%%%%%%%%%%%%%%%%%%%%
\section{Illustrative Example}
\label{section: illus example}
We consider the following non-deterministic nonholonomic robot system, which is modified version of \cite{SEMGL2019}. 
{\switchVersion{shortVersion}{\tinydisplayskip}{}
\begin{align*} 
\dot{x}(t) &= v (1 + \lambda(t)) \cos(\theta(t)) &\\
\dot{y}(t) &= v (1 + \lambda(t)) \sin(\theta(t)) 
& \dot{\theta}(t) = \omega(t)\rlap{,}
\end{align*}}
where $\omega$ is the input signal for the steering angle, $v$ is the speed of the robot, and $\lambda$ is randomly selected from $[- \bar{\lambda}, \bar{\lambda}]$ for a given parameter $\bar{\lambda} \in \Rnonneg$. 
Notice in particular how this simple system verifies no stability
assumption.

\switchVersion{shortVersion}{
$\xi^\fw$ and $\xi^\bw$ may be over-approximations of
how the physical system behaves.
%Indeed, we want to synthesise a controller that will force
%trajectories of the physical system to verify a given specification,
%so it does not matter for soundness if the mathematical modelling of
%the physical system yields more trajectories (although it constrains
%the controller more).
% The non-determinism in the system may come from
% actual non-determinism in the physical system
% or non-determinism
% introduced by the mathematical modelling of the system.
The non-determinism in the system may come from
the physical system or its mathematical modelling (e.g., to account
for floating-point errors).
% This last type of non-determinism may serve different purposes, such
% as making the set of trajectories easier to compute, or taking into
% account the numerical errors introduced when computing the solution of
% a differential equation.
For the system described above, the
non-determinism comes from the physical system, where the
velocity of the robot is known only up to some error bound.
}{
\begin{remark}
Note that $\xi^\fw$ and $\xi^\bw$ may always be over-approximations of
how the physical system behaves.
Indeed, we want to synthesise a controller that will force
trajectories of the physical system to verify a given specification,
so it does not matter for soundness if the mathematical modelling of
the physical system yields more trajectories (although it constrains
the controller more).
This means that the non-determinism in the system may come from two
sources: actual non-determinism in the physical system, due to
uncertainties about physical quantities; and non-determinism
introduced by the mathematical modelling of the system.
This last type of non-determinism may serve different purposes, such
as making the set of trajectories easier to compute, or taking into
account the numerical errors introduced when computing the solution of
a differential equation.

In the case of the nonholonomic robot system described above, the
non-determinism comes from the physical system itself, where the
velocity of the robot is only known up to some error bound.
\end{remark}}

Recall that we only consider piecewise-constant input signals in $\UUU_{\tau, \ell, \mu}$ (see Section \ref{subsection: discrete control input}).
For each signal $\omega$ of length $m \tau$, let $\omega_0, \omega_1, \ldots, \omega_{m-1}$
be the constants signals of length $\tau$ such that
$\omega_k(t) = \omega(k\tau + t)$ for any $t \leq \tau$ and $k \in \{0, ..., m-1\}$.
Then, we define $\beta^\fw_\omega$ and $\alpha^\fw_\omega$ as follows. 
{\switchVersion{shortVersion}{\tinydisplayskip}{}
\begin{align*}
&\beta^\fw_\omega(d,k\tau+t) =\\
  &\left\{
    \begin{array}{ll}
      d + 2v (1+\bar\lambda) \sin(\frac{d}{2}) (k\tau + t - \displaystyle\sum_{i=0}^{k-1}
        f_{\omega_i}(\tau) - f_{\omega_k}(t))
        & \hspace{-0.1cm}\text{if $d < \pi$} \\
      d + 2v (1+\bar\lambda) (k\tau + t - \displaystyle\sum_{i=0}^{k-1}
        f_{\omega_i}(\tau) - f_{\omega_k}(t))
        & \hspace{-0.5cm}\text{otherwise,}
    \end{array}
  \right.\\
  &\alpha^\fw_\omega(d,k\tau+t) = d + v (1+\bar\lambda) (k\tau + t -
  \sum_{i=0}^{k-1} f_{\omega_i}(\tau) - f_{\omega_k}(t)),\\
  &\quad\text{where }f_\omega(t) = \left\{
  \begin{array}{ll}
    \frac{\lfloor \frac{\omega t} {\pi} \rfloor (\pi - 2)}{ \omega} & \text{if $\omega \neq
      0$} \\
    0 & \text{otherwise.}
  \end{array}
\right.
\end{align*}} 
Functions $\beta^\bw_\omega$ and $\alpha^\bw_\omega$ are defined in the same way.

We implement our control algorithm using
$v = 2.5$, $\bar{\lambda} = 0.05$,
$X = [-6,6]\times[-6,6]\times[0,2\pi]$,
$X_\ini = \{(0, 0, \frac{\pi}{4})\}$, 
$U = [-\frac{\pi}{2}, \frac{\pi}{2}]$, 
$\eta = \begin{bmatrix} 1 & 1 & \frac{\pi}{8} \end{bmatrix} ^ \intercal$,
$\mu = \frac{\pi}{2}$,
$\tau = \ell_\text{min} = 0.5$, and
$\ell_\text{max} = 1$.  
For the control specification, we set the threshold $\nu = 0.75$ and $\Phi = \Box\Diamond \phi$ where 
$\begin{bmatrix} x & y & \theta \end{bmatrix} ^ \intercal \vDash \phi \iff x > 0 \wedge y>0$.
In other words, we require the average signal length to be greater than $0.75$ 
and the robot to always run back to the green region in Fig.~\ref{fig: robot} after leaving the region.

%{\color{magenta} Shoud we talk about the precomputation, to make the system fw/bw complete?}
The program was implemented in Python3.7 and run on a standard laptop computer (Intel
i7-7600U 2.80GHz, 12GB memory). 
The symbolic model contains 3384 states and the mean-payoff game contains 26,400 vertices.
To solve the mean-payoff parity game, we combine the algorithm in \cite{Daviaud2018pseudo} with the algorithms for energy parity games \cite{chatterjee12} and 
mean-payoff games \cite{brim10} to make it more tractable.
All processes took 17 minutes in total. 
Fig.~\ref{fig: robot} shows an example of a finite run under the synthesised controller, 
where the Player-2 (the non-determinism of the system) plays the game by selecting the outgoing edges randomly.

\begin{figure}[t]
      \centering
      \includegraphics[scale=0.27]{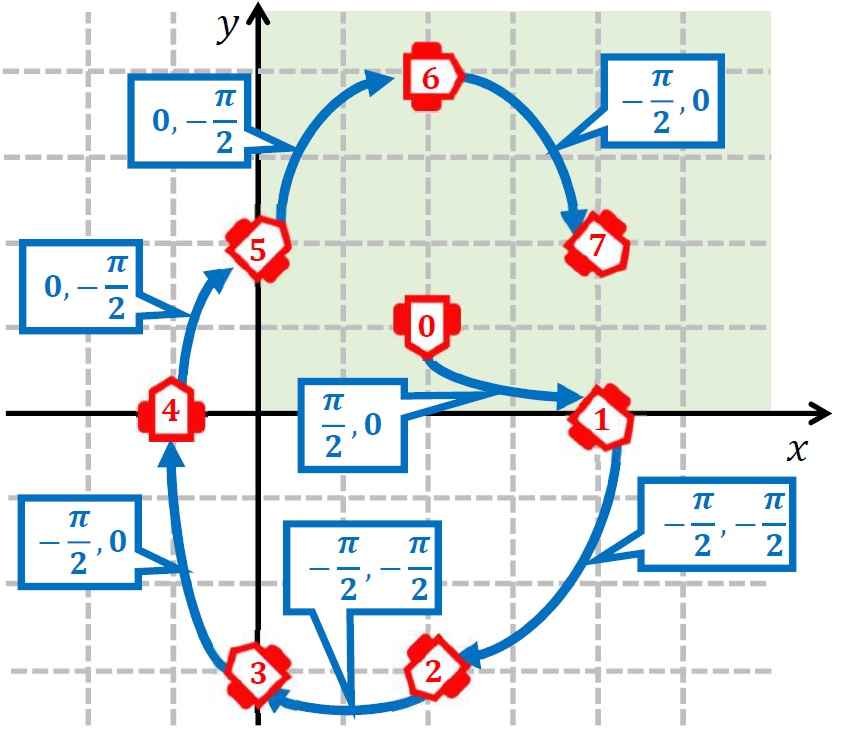}
      \switchVersion{shortVersion}{\squeezeupSmall}{}
      \caption{A finite run under the synthesised controller in 7 time steps. 
      The labels of the arrows show the input signals.
	For example, the arrow from position 0 to 1 represents the signal that assigns
	$\omega = \frac{\pi}{2}$ for $\tau=0.5$ second, and then assigns
	$\omega = 0$ for another $\tau=0.5$ second.  
      }
      \label{fig: robot}
      \switchVersion{shortVersion}{\squeezeupMid}{}
\end{figure}

%%%%%%%%%%%%%%%%%%%%%%%%%%%%%%%%%%%%%%%%%%%%%%%%%%%%%%%%%%%%%%%%%%%%%%%%%%%%%%%%
%\section{CONCLUSIONS AND FUTURE WORKS}
\section{Conclusion and Future Work}
\label{section: conclusion}

In this paper, we proposed a self-triggered control synthesis
procedure for non-deterministic continuous-time nonlinear systems
without stability assumptions.
The two main ingredients of this procedure are
1) discretising the state and input spaces to obtain a discrete
symbolic model corresponding to the original continuous system
2) reducing the control synthesis problem to the computation of a
winning strategy in a mean-payoff parity game.
We illustrated our method on the example of a nonholonomic robot
navigating in an arena, under a specification requiring it to repeat
some reachability tasks.
As a future work, we would like to expand the size of the considered fragment of LTL. 
%, as well as investigating systems with more physical variables such as the bicycle model \cite{kong15}.

%%%%%%%%%%%%%%%%%%%%%%%%%%%%%%%%%%%%%%%%%%%%%%%%%%%%%%%%%%%%%%%%%%%%%%%%%%%%%%%%
\section{ACKNOWLEDGEMENTS}
We thank Prof. Kazumune Hashimoto from Osaka University for his fruitful comments.
\addtolength{\textheight}{-3cm}   % This command serves to balance the column lengths
                                  % on the last page of the document manually. It shortens
                                  % the textheight of the last page by a suitable amount.
                                  % This command does not take effect until the next page
                                  % so it should come on the page before the last. Make
                                  % sure that you do not shorten the textheight too much.

%The authors gratefully acknowledge the contribution of National Research Organization and reviewers' comments.

%%%%%%%%%%%%%%%%%%%%%%%%%%%%%%%%%%%%%%%%%%%%%%%%%%%%%%%%%%%%%%%%%%%%%%%%%%%%%%%%

%References are important to the reader; therefore, each citation must be complete and correct. If at all possible, references should be commonly available publications.

\bibliographystyle{IEEEtran}
%\bibliography{refs}

\end{document}